\numberwithin{equation}{section}
 \newtheorem{thm}{Theorem}
 \newtheorem{prop}[thm]{Proposition}
 \newtheorem{lem}[thm]{Lemma}
 \newtheorem{cor}[thm]{Corollary}
 \newtheorem{defn}{Definition}
 \newtheorem{rem}{Remark}
  \newtheorem{exm}{Example}
  \newcommand{\f}{\mathbb{F}_{q}}
  \newcommand{\fl}{\mathbb{F}_{5}}
   \newcommand{\p}{\mathcal {P}}
\begin{document}
\title{Multi-receiver Authentication Scheme for Multiple Messages Based on Linear Codes}
\author{Jun Zhang, Xinran Li and Fang-Wei Fu}
\address{Chern Institute of Mathematics, Nankai University, Tianjin, P.R. China}
\email{zhangjun04@mail.nankai.edu.cn; xinranli@mail.nankai.edu.cn; fwfu@nankai.edu.cn}
\thanks{This research is supported by the National Key Basic Research Program of China (Grant No. 2013CB834204), and the National Natural Science Foundation of
China (Nos. 61171082, 10990011 and 60872025). The author Jun Zhang is also supproted by the Chinese Scholarship Council under the State Scholarship Fund during visiting University of California, Irvine.}
\date{}

 \maketitle
\begin{abstract}
In this paper, we construct an authentication scheme for multi-receivers and multiple messages based on a linear code $C$. This construction can be regarded as a generalization of the authentication scheme given by Safavi-Naini and Wang \cite{SW}. Actually, we notice that the scheme of Safavi-Naini and Wang is constructed with Reed-Solomon codes. The generalization to linear codes has the similar advantages as generalizing Shamir's secret sharing scheme to linear secret sharing sceme based on linear codes \cite{CC,M2,M1,MS,Shamir}. For a fixed message base field $\f$, our scheme allows arbitrarily many receivers to check the integrity of their own messages, while the scheme of Safavi-Naini and Wang has a constraint on the number of verifying receivers $V\leqslant q$. And we introduce access structure in our scheme. Massey~\cite{M1} characterized the access structure of linear secret sharing scheme by minimal codewords in the dual code whose first component is $1$. We slightly modify the definition of minimal codewords in~\cite{M1}. Let $C$ be a $[V,k]$ linear code. For any coordinate $i\in \{1,2,\cdots,V\}$, a codeword $\vec{c}$ in $C$ is called \emph{minimal respect to $i$} if the codeword $\vec{c}$ has component $1$ at the $i$-th coordinate and there is no other codeword whose $i$-th component is $1$ with support strictly contained in that of $\vec{c}$. Then the security of receiver $R_i$ in our authentication scheme is characterized by the minimal codewords respect to $i$ in the dual code $C^\bot$.
\keywords{ Authentication scheme, linear codes, secret sharing, minimal codewords, substitution attack.}

\end{abstract}

\section{Introduction}
\subsection{Background}
One of the important goals of cryptographic scheme is authentication, which is concerned with the approaches of providing data integrity and data origin validation between two communication entities in computer network. Traditionally, it simply deals with the data authentication problem from a single sender to a single receiver.
With the rapid progress of network communication, the urgent need for providing data authentication has escalated to multi-receiver and/or multi-sender scenarios. However, the original point-to-point authentication techniques are not suitable for multi-point communication. In the multi-receiver authentication model, a sender broadcasts an authenticated message such that all the receivers can independently verify the authenticity of the message with their own private keys. It requires a security that malicious groups of up to a given size of receivers can not successfully impersonate the transmitter, or substitute a transmitted message. Desmedt et al. \cite{DFY} gave an authentication scheme of single message for multi-receivers. Safavi-Naini and Wang \cite{SW} extended the DFY scheme \cite{DFY} to be an authentication scheme of multiple messages for multi-receivers.

The receivers independently verify the authenticity of the message using each own private key. So multi-receiver authentication scheme involves a procedure of secret sharing. To introduce the linear secret sharing scheme based on linear codes, we recall some definitions in coding theory.

Let $\f^V$ be the $V$-dimensional vector space over the finite field $\f$ with $q$ elements. For any vector $\vec{x}=(x_1,x_2,\cdots,x_V)\in \f^V$, the \emph{Hamming weight} $\mathrm{Wt}(\vec{x})$ of $\vec{x}$ is defined to be the number of non-zero coordinates, i.e.,
$$\mathrm{Wt}(\vec{x})=\#\left\{i\,|\,1\leqslant i\leqslant V,\,x_i\neq 0\right\}\ .$$
A \emph{linear $[V,k]$ code} $C$ is a $k$-dimensional linear subspace of $\f^V$. The \emph{minimum distance} $d(C)$ of $C$ is the minimum Hamming weight of all non-zero vectors in $C$, i.e.,
$$d(C)=\min\{\mathrm{Wt}(\vec{c})\,|\,\vec{c}\in C\setminus\{\vec{0}\}\}\ .$$
A linear $[V,k]$ code $C\subseteq \f^V$ is called a \emph{$[V,k,d]$ linear code} if $C$ has minimum distance $d$. A vector in $C$ is called a $codeword$ of $C$. A matrix $G\in \f^{k\times V}$ is call a \emph{generator matrix} of $C$ if rows of $G$ form a basis for $C$. A well known trade-off between the parameters of a linear $[V,k,d]$ code is the Singleton bound which states that
$$d\leqslant V-k+1\ .$$
 A $[V,k,d]$ code is called a \emph{maximum distance separable} (MDS) code if $d=V-k+1$. The \emph{dual code} $C^\bot$ of $C$ is defined as the set
\[
  \left\{\vec{x}\in \f^V\,|\,\vec{x}\cdot\vec{c}=0\,\mbox{for all }\vec{c}\in C\right\},
\]
where $\vec{x}\cdot\vec{c}$ is the inner product of vectors $\vec{x}$ and $\vec{c}$, i.e.,
\[
    \vec{x}\cdot\vec{c}=x_1c_1+x_2c_2+\cdots+x_Vc_V\ .
\]

The secret sharing scheme provides security of a secret key by ``splitting" it to several parts which are kept by different persons. In this way, it might need many persons to recover the original key. It can achieve to resist the attack of malicious groups of persons. Shamir \cite{Shamir} used polynomials over finite fields to give an $(S,T)$ threshold secret sharing scheme such that any $T$ persons of the $S$ shares can uniquely determine the secret key but any $T-1$ persons can not get any information of the key. A linear secret sharing scheme based on a linear code~\cite{M1} is constructed as follows: encrypt the secret to be the first coordinate of a codeword and distribute the rest of the codeword (except the first secret coordinate) to the group of shares. McEliece and Sarwate \cite{MS} pointed out that the Shamir's construction is essentially a linear secret sharing scheme based on Reed-Solomon codes. Also as a natural generalization of Shamir'construction, Chen and Cramer \cite{CC} constructed a linear secret sharing scheme based on algebraic geometric codes.

The \emph{qualified subset} of a linear secret sharing scheme is a subset of shares such that the shares in the subset can recover the secret key. A qualified subset is call \emph{minimal} if any share is removed from the qualified subset, the rests cannot recover the secret key. The \emph{access structure} of a linear secret sharing scheme consists of all the minimal qualified subsets. A codeword $\vec{v}$ in a linear code $C$ is said to be
\emph{minimal} if $\vec{v}$ is a non-zero codeword whose leftmost
nonzero component is a $1$ and no other codeword
$\vec{v}^{\prime}$ whose leftmost nonzero component is $1$ has support strictly contained in the support of $\vec{v}$. Massey \cite{M1,M2} showed that the access structure of a linear secret sharing scheme based on a linear code are completely determined by the minimal codewords in the dual code whose first component is $1$.
\begin{prop}[\cite{M1}]\label{minimal}
The access structure of the linear secret-sharing scheme corresponding to
the linear code $C$ is specified by those minimal
codewords in the dual code $C^{\perp}$ whose first component is $1$. In the manner that the set of shares specified by a
minimal codeword whose first component is $1$ in the dual code is the set of shares corresponding
to those locations after the first in the support of this minimal codeword.
\end{prop}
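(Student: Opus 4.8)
The plan is to verify Massey's correspondence by first making the reconstruction rule of the code-based scheme explicit, and then matching minimal qualified sets with minimal codewords of $C^{\perp}$ coordinate by coordinate. Recall that in the scheme one picks a codeword $\vec{c}=(c_1,c_2,\ldots,c_V)\in C$, declares the secret to be $c_1$, and hands $c_i$ to the $i$-th participant for $2\leqslant i\leqslant V$. A set of participants indexed by $A\subseteq\{2,\ldots,V\}$ is qualified exactly when the values $(c_i)_{i\in A}$ determine $c_1$ for every codeword $\vec{c}\in C$. My first step is to record the algebraic criterion
\[
A \text{ is qualified} \iff \exists\,\vec{v}\in C^{\perp}\text{ with } v_1=1 \text{ and } \mathrm{supp}(\vec{v})\subseteq\{1\}\cup A .
\]

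For the implication ($\Leftarrow$) I would simply read off the dual relation: such a $\vec{v}$ satisfies $\vec{v}\cdot\vec{c}=0$, i.e. $c_1=-\sum_{i\in A}v_ic_i$, which reconstructs the secret from the shares in $A$. For ($\Rightarrow$) I would argue by duality on the coordinate functionals $\pi_i\colon C\to\f$, $\pi_i(\vec{c})=c_i$. Since the scheme is linear, $A$ determines $c_1$ iff every codeword vanishing on all of $A$ also vanishes at coordinate $1$, i.e. $\bigcap_{i\in A}\ker\pi_i\subseteq\ker\pi_1$; the standard fact that a functional annihilated by the common kernel of a family lies in their span then gives $\pi_1=\sum_{i\in A}\lambda_i\pi_i$ on $C$. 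Equivalently the vector $\vec{v}=\vec{e}_1-\sum_{i\in A}\lambda_i\vec{e}_i$ (with $\vec{e}_i$ the $i$-th standard basis vector) pairs to zero against every codeword, so $\vec{v}\in C^{\perp}$; by construction $v_1=1$ and $\mathrm{supp}(\vec{v})\subseteq\{1\}\cup A$, as required. The main technical point here is the passage between functionals on $C$ and vectors modulo $C^{\perp}$, which rests on the identification of the dual space of $C$ with $\f^V/C^{\perp}$.

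Finally I would transfer minimality across the criterion. Given a minimal codeword $\vec{v}\in C^{\perp}$ with $v_1=1$, the set $A_{\vec{v}}=\mathrm{supp}(\vec{v})\setminus\{1\}$ is qualified; if some $B\subsetneq A_{\vec{v}}$ were qualified it would produce a dual codeword $\vec{w}$ with $w_1=1$ and $\mathrm{supp}(\vec{w})\subseteq\{1\}\cup B\subsetneq\mathrm{supp}(\vec{v})$, contradicting minimality of $\vec{v}$. Conversely, for a minimal qualified set $A$ I would choose $\vec{v}\in C^{\perp}$ with $v_1=1$ and $\mathrm{supp}(\vec{v})\subseteq\{1\}\cup A$ of smallest possible support; minimality of $A$ forces $\mathrm{supp}(\vec{v})\setminus\{1\}=A$, and I claim $\vec{v}$ is then a minimal codeword. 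The delicate case—and the place I expect the main obstacle—is ruling out a competing codeword $\vec{w}$ whose leftmost nonzero entry equals $1$ but sits at a coordinate after the first, so that $w_1=0$: here I would pick any $i_0\in\mathrm{supp}(\vec{w})$ and subtract the scalar multiple $(v_{i_0}/w_{i_0})\vec{w}$ from $\vec{v}$, producing a dual codeword still with first component $1$, with support inside $\mathrm{supp}(\vec{v})$, but missing $i_0$—contradicting the minimal-support choice of $\vec{v}$. Combining the two directions yields exactly Massey's description of the access structure by the minimal codewords of $C^{\perp}$ whose first component is $1$.
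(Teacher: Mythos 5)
Your proof is correct. Bear in mind, though, that the paper gives no proof of Proposition~\ref{minimal} at all --- it is quoted from Massey \cite{M1} --- so the meaningful comparison is with the paper's own generalization of it, Proposition~\ref{reconstruction} together with Theorem~\ref{thm}, which replace the first coordinate by an arbitrary coordinate $i$ in the authentication setting. Measured against that, your route is essentially the paper's: your criterion that $A$ is qualified if and only if some $\vec{v}\in C^{\perp}$ has $v_1=1$ and $\mathrm{supp}(\vec{v})\subseteq\{1\}\cup A$ is exactly the chain of equivalences (iii)$\iff$(iv)$\iff$(v) of Proposition~\ref{reconstruction} specialized to $i=1$; your ($\Leftarrow$) is the paper's (iii)$\Rightarrow$(iv) (read off the dual relation), and your ($\Rightarrow$), via the fact that a functional vanishing on the common kernel of a family lies in its span, is the paper's (v)$\Rightarrow$(iv)$\Rightarrow$(iii), which likewise realizes $\pi_1-\sum_{i\in A}\lambda_i\pi_i$ as a dual codeword. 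What you add is precisely the half the paper leaves unwritten: Theorem~\ref{thm} is asserted to follow ``by Proposition~\ref{reconstruction}'' with no argument, whereas you carry out the transfer of minimality explicitly, including the only delicate case --- a competing codeword $\vec{w}$ whose leftmost $1$ sits after the first position, eliminated by passing to $\vec{v}-(v_{i_0}/w_{i_0})\vec{w}$, which keeps the first component equal to $1$ while shrinking the support (legitimate because $\mathrm{supp}(\vec{w})\subsetneq\mathrm{supp}(\vec{v})$ forces $v_{i_0}\neq 0$). Two small points would make it fully watertight: state the monotonicity of qualified sets, which you use where ``minimality of $A$ forces $\mathrm{supp}(\vec{v})\setminus\{1\}=A$'', and record the easy competing case $w_1=1$, which contradicts your minimal-support choice of $\vec{v}$ directly.
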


In both schemes of Desmedt et al. \cite{DFY} and Safavi-Naini and Wang~\cite{SW}, the key distribution is similar to that in Shamir's secret sharing scheme \cite{Shamir}, using polynomials. Both schemes are $(V,k)$ threshold authentication scheme, i.e., any malicious groups of up to $k-1$ receivers can not successfully ( unconditional secure in the meaning of information theory) impersonate the transmitter, or substitute a transmitted message to any other receiver, while any $k$ receivers or more receivers can successfully impersonate the transmitter, or substitute a transmitted message to any other receiver. Actually, in the proof of security of the authentication scheme of Safavi-Naini and Wang, the security is equivalent to the difficulty to recover the private key of other receivers. So the security essentially depends on the security of key distribution.

In this paper, we use general linear codes to generalize the scheme of Safavi-Naini and Wang. One advantage is that our scheme allows arbitrarily many verifying receivers for a fixed message base field $\f$, while the scheme of Safavi-Naini and Wang has a constraint on the number of verifying receivers $V\leqslant q$. We introduce the concept of minimal codeword respect to each coordinate, which helps to characterize the capability of resisting substitution attack in our authentication scheme, similarly to the linear secret sharing scheme~\cite{M2}. It guarantees higher security for some important receivers.

\subsection{Our Construction and Main Results}
In a multi-receiver authentication model for multiple messages, a trusted authority choose random parameters as the secret key and generates shares of private keys secretly. Then the trusted authority transmits a private key to each receiver and secret parameters to the source. For each fixed message, the source computes the authentication tag using the secret parameters and sends the message adding with the tag. In the verification phase, the receiver verify the integrity of each tagged message using his private key. There are some malicious receivers who collude to perform an impersonation attack by constructing a fake message, or a substitution attack by altering the message content such that the new tagged message can be accepted by some other receiver or specific receiver.

 In this subsection, we present our construction of an authentication scheme based on a linear code for multi-receivers and multiple messages. It will be shown that the ability of our scheme to resist the attack of the malicious receivers is measured by the minimum distance of the dual code and minimal codewords respect to specific coordinate in the dual code.

Let $C\subseteq \f^V$ be a linear code with minimum distance $d(C)\geqslant 2$. And assume that the minimum distance of the dual code $C^\bot$ is $d(C^\bot)\geqslant 2$. Fix a generator matrix $G$ of $C$

\begin{equation*}
    G=\left(
  \begin{array}{cccc}
    g_{1,1} & g_{1,2} & \cdots & g_{1,V} \\
     g_{2,1} & g_{2,2} & \cdots & g_{2,V} \\
     \vdots & \vdots & \ddots & \vdots \\
    g_{k,1} & g_{k,2} & \cdots & g_{k,V}\\
  \end{array}
\right)\ .
\end{equation*}
Then make $G$ public. Our scheme is as follows.

\begin{itemize}
  \item \textbf{Key generation:} A trusted authority randomly chooses parameters
  \begin{equation*}
    A=\left(
  \begin{array}{cccc}
    a_{0,1} & a_{0,2} & \cdots & a_{0,k} \\
     a_{1,1} & a_{1,2} & \cdots & a_{1,k} \\
     \vdots & \vdots & \ddots & \vdots \\
    a_{M,1} & a_{M,2} & \cdots & a_{M,k}\\
  \end{array}
\right)\in \f^{(M+1)\times k}\ .
\end{equation*}
  \item \textbf{Key distribution:}  The trusted authority computes
    \begin{equation*}
    B=A\cdot G=\left(
  \begin{array}{cccc}
    b_{0,1} & b_{0,2} & \cdots & b_{0,V} \\
     b_{1,1} & b_{1,2} & \cdots & b_{1,V} \\
     \vdots & \vdots & \ddots & \vdots \\
    b_{M,1} & b_{M,2} & \cdots & b_{M,V}\\
  \end{array}
\right)\ .
\end{equation*}
  Then the trusted authority distributes each receiver $R_i$ the $i$-th column of $B$ as his private key, for $i=1,2,\cdots,V$.

  \item \textbf{Authentication tag:} For message $s\in \f$, the source computes the
 tag map
      \begin{equation*}
  \begin{array}{cccc}
    L=[L_1,L_2,\cdots,L_k]: & \f & \rightarrow & \f^{k} \\
     & s &\mapsto &[L_1(s),L_2(s),\cdots,L_k(s)] \ ,
  \end{array}
\end{equation*}
where the map $L_i$ ($i=1,2,\cdots,k$) is defined by
\[
   L_i(s)=\sum_{j=0}^M a_{j,i}s^{j}\ .
\]
Instead of sending the message $s\in \f$, the source actually sends the authenticated messages $\vec{x}$ of the form\footnote{In general, we can first use a hash function $h:\{0,1\}^{*}\rightarrow\f$ to hash the message $s$, then send the tagged message $[s, L(h(s))]$.}
\[
  \vec{x}=[s, L(s)]\in \f^{1+k}\ .
\]
\item \textbf{Verification:}The receiver $R_i$ accepts the message $[s, L(s)]$ if $ \sum_{t=0}^M s^{t} b_{t,i}= \sum_{j=1}^{k} L_j(s)g_{j,i}$. Under the integrity of the tagged message, one can easily verify the following
\begin{equation*}
    \sum_{t=0}^M s^{t} b_{t,i}=\sum_{t=0}^M s^{t}\sum_{j=1}^k a_{t,j}g_{j,i}=\sum_{j=1}^k (\sum_{t=0}^M a_{t,j}s^{t})g_{j,i}= \sum_{j=1}^{k} L_j(s)g_{j,i}\ .
\end{equation*}
Here, we call the result $\sum_{t=0}^M s^{t} b_{t,i}$ the \emph{label} of $R_i$ for message $s$.
\end{itemize}
If we take $C$ to be the Reed-Solomon code, i.e., the generator matrix $G$ is of the form
\begin{equation}\label{rs}
     G=\left(
   \begin{array}{cccc}
     1 & 1 & \cdots & 1 \\
     x_1 & x_2 & \cdots & x_V \\
     x_1^2 & x_2^2 & \cdots & x_V^2 \\
      \vdots & \vdots & \ddots & \vdots \\
     x_1^{k-1} & x_2^{k-1} & \cdots & x_V^{k-1}
   \end{array}
 \right)\ ,
 \end{equation}
  for pairwise distinct $x_1,x_2,\cdots,x_V\in \f$, then the scheme is the scheme of Safavi-Naini and Wang~\cite{SW}.

The security of the above authentication scheme is summarized in the following theorems.
\begin{thm}
The scheme we constructed above is a unconditionally secure multi-receiver authentication code against a coalition of up to ($d(C^\bot)-2$) malicious receivers in which every key can be used to authentication up to $M$ messages.
\end{thm}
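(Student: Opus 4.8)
The plan is to reduce the entire statement to one linear-algebra fact -- that a target receiver's label is uniformly distributed given everything the colluding receivers can see -- and to establish that fact from the minimum distance of $C^\bot$ together with a Vandermonde argument. Fix an honest target receiver $R_i$ and a coalition $W\subseteq\{1,\dots,V\}\setminus\{i\}$ with $w:=|W|\leqslant d(C^\bot)-2$. Every piece of the coalition's view is a \emph{linear} function of the secret matrix $A$: the private keys $b_{t,l}=\sum_j a_{t,j}g_{j,l}$ for $l\in W$ and $0\leqslant t\leqslant M$, and, in a substitution attack, the tag entries $L_j(s_r)=\sum_t a_{t,j}s_r^t$ of the observed messages $s_1,\dots,s_m$ with $m\leqslant M$ (impersonation is the case $m=0$). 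A forged pair $[s^*,L^*]$ with a fresh message $s^*\notin\{s_1,\dots,s_m\}$ is accepted by $R_i$ exactly when $\sum_{j}L^*_jg_{j,i}=\ell_i(s^*)$, where $\ell_i(s^*)=\sum_t (s^*)^t b_{t,i}$ is $R_i$'s label. Since the coalition chooses $L^*$ as a function of its view, the quantity $\sum_j L^*_jg_{j,i}$ is a fixed field element once the view is fixed; hence the acceptance probability equals $\max_{v\in\f}\Pr[\ell_i(s^*)=v\mid\mathrm{view}]$. It therefore suffices to show that, conditioned on the whole view, $\ell_i(s^*)$ is uniform over $\f$, which pins the impersonation and substitution probabilities to $1/q$.

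The core step is that the linear functional $A\mapsto\ell_i(s^*)=\sum_{t,j}(s^*)^t g_{j,i}\,a_{t,j}$ is linearly independent of the functionals making up the view. I would prove this by producing a single rank-one perturbation $\Delta=(u_tv_j)_{t,j}\in\f^{(M+1)\times k}$ that leaves every observed functional unchanged yet moves the label. Writing $\vec g_l=(g_{1,l},\dots,g_{k,l})$ for the $l$-th column of $G$, choose $\vec v\in\f^k$ with $\vec v\cdot\vec g_l=0$ for all $l\in W$ but $\vec v\cdot\vec g_i\neq0$, and choose $\vec u\in\f^{M+1}$ with $\sum_t u_t s_r^t=0$ for every $r$ but $\sum_t u_t (s^*)^t\neq0$. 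A direct computation gives $\sum_j\Delta_{t,j}g_{j,l}=u_t(\vec v\cdot\vec g_l)=0$ (keys unchanged) and $\sum_t\Delta_{t,j}s_r^t=v_j\sum_t u_t s_r^t=0$ (tags unchanged), while $\sum_{t,j}(s^*)^t g_{j,i}\Delta_{t,j}=\bigl(\sum_t u_t(s^*)^t\bigr)(\vec v\cdot\vec g_i)\neq0$. Consequently, replacing any view-consistent secret $A$ by $A+\lambda\Delta$ for $\lambda\in\f$ leaves the view fixed while sweeping $\ell_i(s^*)$ bijectively over $\f$, which is exactly the required uniformity.

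It remains to guarantee the two choices, and this is where the hypotheses are used. The vector $\vec u$ exists precisely when $(1,s^*,\dots,(s^*)^M)$ is not in the span of the vectors $(1,s_r,\dots,s_r^M)$; as $s^*$ differs from the at most $m\leqslant M$ distinct observed messages, these are at most $M+1$ length-$(M+1)$ Vandermonde vectors on distinct nodes and hence linearly independent, so $\vec u$ is available. The vector $\vec v$ exists precisely when $\vec g_i\notin\operatorname{span}\{\vec g_l:l\in W\}$; the columns $\{\vec g_l:l\in W\}\cup\{\vec g_i\}$ number $w+1\leqslant d(C^\bot)-1$, and since $d(C^\bot)$ is the least number of linearly dependent columns of a generator matrix of $C$, any $d(C^\bot)-1$ columns of $G$ are independent, so $\vec g_i$ is not a combination of the coalition's columns (here $d(C^\bot)\geqslant2$ also ensures $\vec g_i\neq0$, so that $L^*\mapsto\sum_jL^*_jg_{j,i}$ is onto $\f$ and the bound $1/q$ is genuinely attained). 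The main obstacle is conceptual rather than computational: the key constraints are ``horizontal'', coupling the columns of $G$ within each fixed power $t$, whereas the tag constraints are ``vertical'', coupling the powers of $s$ within each fixed coordinate $j$, and one must show these two different families never combine to reveal more about $\ell_i(s^*)$ than either does alone. The rank-one direction $\Delta=(u_tv_j)$ is what decouples them, because each family of constraints only interacts with one of its two factors.
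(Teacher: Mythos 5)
Your proof is correct, and it takes a genuinely different route from the paper's (which proves this statement as Theorem~\ref{general}). The paper's argument runs through a counting lemma, Lemma~\ref{keylem}: by a block-matrix rank computation it shows that the linear system encoding the coalition's view has exactly $q^{k-K_0}$ solutions $A$, where $K_0$ is the dimension of the span of the coalition's columns of $G$; the proof of Theorem~\ref{general} then partitions these solutions into $q^{k-d(C^\bot)+1}$ classes of size $q$ by prescribing the values of $A$ on extra columns of $G$ completing a basis of $\f^k$, and shows---via invertibility of an $(M+1)\times(M+1)$ Vandermonde matrix and a $k\times k$ matrix of columns of $G$---that the label map is a bijection from each class onto $\f$. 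You bypass the counting altogether: your rank-one matrix $\Delta=(u_tv_j)$ is a single direction lying in the kernel of every view functional (keys and tags) but not of the label functional, so translation $A\mapsto A+\lambda\Delta$ partitions the consistent keys into orbits of size $q$ on which the label sweeps $\f$ bijectively, and uniformity follows. Both arguments rest on the same two facts---any $d(C^\bot)-1$ columns of $G$ are linearly independent, and Vandermonde rows at distinct nodes are linearly independent---but yours is leaner and more general in its bookkeeping: it treats an arbitrary coalition $W$ and an arbitrary target $R_i$ rather than normalizing to $R_1,\dots,R_K$ attacking $R_{K+1}$ with $K$ exactly equal to $d(C^\bot)-2$, and it covers substitution after any number $m\leqslant M$ of observed messages together with impersonation ($m=0$) in one stroke, where the paper argues only the extremal case $m=M$ on the grounds that it reveals the most. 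The factorization $\Delta_{t,j}=u_tv_j$ also makes transparent why the ``horizontal'' key constraints and ``vertical'' tag constraints cannot jointly determine the label. What the paper's heavier route buys is the exact count of consistent key matrices, which it reuses downstream: Remark~\ref{motivation} (the bridge to Section~3), the equivalence of (i) and (ii) in Proposition~\ref{reconstruction}, and the explicit solution counts in Example~\ref{exm1} all lean on Lemma~\ref{keylem} itself rather than on the security statement alone.
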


More specifically, if we consider what a coalition of malicious receivers can successfully make a substitution attack to one fixed receiver $R_i$. To characterize this malicious group, we slightly modify the definition of minimal codeword in \cite{M1}.
\begin{defn}
Let $C$ be a $[N,k]$ linear code. For any $i\in \{1,2,\cdots,N\}$, a codeword $\vec{c}$ in $C$ is called \emph{minimal respect to $i$} if the codeword $\vec{c}$ has component $1$ at the $i$-th location and there is no other codeword whose $i$-th component is $1$ with support strictly contained in that of $\vec{c}$.
\end{defn}
Then we have
\begin{thm}\label{thm2}For the authentication scheme we constructed, we have
\begin{description}
\item[(i)] The set of all minimal malicious groups that can successfully make a substitution attack to the receiver $R_{i}$ is determined completely by all the minimal codewords respect to $i$ in the dual code $C^\bot$.

\item[(ii)] All malicious groups that can not produce a fake authenticated message which can be accepted by the receiver $R_{i}$ are one-to-one corresponding to subsets of $[V]\setminus\{i\}$ such that each of them together with $i$ does not contain any support of minimal codeword respect to $i$ in the dual code $C^\bot$, where $[V]=\{1,2,\cdots,V\}$.
\end{description}
\end{thm}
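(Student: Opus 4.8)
The plan is to reduce both parts to a single linear-algebra condition on the columns of $G$, and then translate that condition into the language of dual codewords and minimality. Throughout let $\vec g_j=(g_{1,j},\dots,g_{k,j})$ denote the $j$-th column of $G$.

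First I would isolate the core quantity. For a fixed message $s$ the vector of all labels $(\ell_1(s),\dots,\ell_V(s))$ equals $L(s)\cdot G$ by the verification identity, hence is a codeword of $C$; in particular $\ell_i(s)=\sum_{j=1}^k L_j(s)g_{j,i}$ is the inner product of $L(s)$ with $\vec g_i$. Since $d(C^\bot)\geqslant 2$ forces $\vec g_i\neq\vec 0$, a coalition $W\subseteq[V]\setminus\{i\}$ can make $R_i$ accept a message $[s',T']$ with fresh content $s'$ exactly when it can compute the target label $\ell_i(s')$, after which it solves $\sum_j T'_j g_{j,i}=\ell_i(s')$ for a matching tag. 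So everything reduces to the question: given the coalition's knowledge, is $\ell_i(s')$ information-theoretically determined?

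I would make this precise by viewing the secret $A$ as a uniform element of $\f^{M+1}\otimes\f^{k}$. The private keys give the functionals $e_t\otimes\vec g_w$ ($0\leqslant t\leqslant M$, $w\in W$), each observed tag at a message $s_l$ gives the functionals $\vec s_l\otimes e_j$ with $\vec s_l=(1,s_l,\dots,s_l^{M})$, and the target is $\vec s'\otimes\vec g_i$. The label $\ell_i(s')$ is determined iff this target lies in the span of the known functionals; otherwise it is uniform and the best success probability is $1/q$, i.e. the coalition fails in the unconditional model. Writing $U=\langle \vec g_w:w\in W\rangle$ and $S=\langle\vec s_1,\dots,\vec s_m\rangle$, the known span is $\f^{M+1}\otimes U+S\otimes\f^{k}$, whose only missing tensor block is $S'\otimes U'$ for complements $\f^{M+1}=S\oplus S'$ and $\f^k=U\oplus U'$. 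The main technical point, and where I expect the real work, is that for $m\leqslant M$ the Vandermonde vectors $\vec s',\vec s_1,\dots,\vec s_m$ are linearly independent, so $\vec s'$ may be taken inside $S'$; decomposing $\vec g_i=u+u'$ with $u\in U$, $u'\in U'$ then shows $\vec s'\otimes\vec g_i$ lies in the known span iff $u'=0$, i.e. iff $\vec g_i\in U$. This is exactly where the ``up to $M$ messages'' hypothesis enters, and it renders the observed tags irrelevant to the target. The upshot is that $W$ can attack $R_i$ iff $\vec g_i\in\langle\vec g_w:w\in W\rangle$. I would then note that a relation $\vec g_i=\sum_{w\in W}\lambda_w\vec g_w$ is the same as a vector $\vec h\in\f^V$ with $h_i=1$, $h_w=-\lambda_w$, and $h_j=0$ otherwise satisfying $\sum_j h_j\vec g_j=\vec 0$, i.e. $\vec h\in C^\bot$ with $h_i=1$ and $\mathrm{supp}(\vec h)\subseteq W\cup\{i\}$. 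Hence $W$ attacks $R_i$ iff some $\vec h\in C^\bot$ has $h_i=1$ and $\mathrm{supp}(\vec h)\subseteq W\cup\{i\}$.

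Finally I would extract the two statements by a minimality argument. For (i), any successful $W$ admits such an $\vec h$, and choosing one whose support is minimal among $\{\vec c\in C^\bot:c_i=1,\ \mathrm{supp}(\vec c)\subseteq\mathrm{supp}(\vec h)\}$ yields a codeword minimal respect to $i$ with $\mathrm{supp}(\vec h)\setminus\{i\}\subseteq W$; conversely each minimal-respect-to-$i$ codeword $\vec h$ gives a minimal successful coalition $\mathrm{supp}(\vec h)\setminus\{i\}$, since a proper subset containing another such support would contradict minimality. So the minimal malicious groups are exactly the sets $\mathrm{supp}(\vec h)\setminus\{i\}$ as $\vec h$ ranges over minimal codewords respect to $i$ in $C^\bot$. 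For (ii), $W$ fails iff no such $\vec h$ exists, which by the reduction to minimal codewords is precisely the condition that $W\cup\{i\}$ contains the support of no minimal codeword respect to $i$; the claimed one-to-one correspondence is then the identity between the family of non-attacking coalitions and the family of such subsets of $[V]\setminus\{i\}$.
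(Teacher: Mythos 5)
Your proposal is correct, and it reaches the theorem by a genuinely different route from the paper's. In the paper this statement is restated as Theorem~\ref{thm} and deduced from Proposition~\ref{reconstruction}, whose key equivalence (a coalition $B$ is a substitution group iff $\vec{g}_i\in\langle\vec{g}_j\,:\,j\in B\rangle$ iff some $\vec{c}\in C^\bot$ has $\pi_i(\vec{c})=1$ and support inside $B\cup\{i\}$) rests on Remark~\ref{motivation}; that remark is in turn extracted, somewhat informally, from the proofs of Lemma~\ref{keylem} (an explicit rank and solution count for the coalition's linear system, stated only under the hypothesis $\dim P\leqslant k-1$) and Theorem~\ref{general} (surjectivity and equidistribution of the label map, proved by partitioning the solution set, and only for $K=d(C^\bot)-2$ colluders targeting $R_{K+1}$). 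You instead encode the coalition's knowledge as linear functionals of the uniformly random key $A$ and invoke the standard dichotomy that a further functional is either in their span (hence determined) or exactly uniform (hence success probability $1/q$); the span computation via the tensor decomposition $\f^{M+1}\otimes U+S\otimes\f^{k}$, with $\vec{s}'$ placed in a complement of $S$ by Vandermonde independence, then yields the criterion $\vec{g}_i\in U$ in one stroke, for coalitions of arbitrary size and any $m\leqslant M$ observed messages. This is the same underlying linear algebra --- the paper's rank $Mk+K_0$ in Lemma~\ref{keylem} is precisely the dimension of your span --- but your packaging makes the ``only if'' direction, which the paper asserts rather than proves in Remark~\ref{motivation}, rigorous and fully general, and it isolates exactly where the bound of $M$ messages enters. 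Your translation to dual codewords and the minimality bookkeeping coincide with Proposition~\ref{reconstruction} (ii)$\Leftrightarrow$(iii) and with the step the paper leaves implicit in passing to Theorem~\ref{thm}; you make that step explicit by selecting a support-minimal element of $\{\vec{c}\in C^\bot \,:\, \pi_i(\vec{c})=1,\ \mathrm{supp}(\vec{c})\subseteq\mathrm{supp}(\vec{h})\}$, which is the right argument. What the paper's route buys is the exact count $q^{k-K_0}$ of consistent keys and the concrete label distribution exploited in Example~\ref{exm1}; what yours buys is brevity, self-containedness, and uniform treatment of all coalition sizes and message counts.
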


Compared with Safavi-Naini and Wang's scheme, our scheme has an important advantage. The scheme of Safavi-Naini and Wang is a $(V,k)$ threshold authentication scheme, so any coalition of $k$ malicious receivers can easily make a substitution attack to any other receiver. While in our scheme, by Theorem \ref{thm2}, sometimes it can withstand the attack of coalitions of $k$ or more malicious receivers to some fixed important receiver $R_i$. And it is in general {\bf NP}-hard to find one (or list all) coalition(s) of malicious receivers with the minimum members that can make a substitution attack to the receiver $R_i$. So in this sense, our scheme has better security than the previous one.

The rest of this paper is organized as follows. In Section~2, we give the security analysis of our scheme. In Section~3, we show the relationship between the security of our scheme and parameters of the linear code.

\section{Security Analysis of Our Authentication Scheme}
In this section, we present the security analysis of our scheme. From the verification step, we notice that a tagged message $[s,v_1,v_2,\cdots,v_k]$ can be accepted by the receiver $R_i$ if and only if $\sum_{t=0}^M s^{t} b_{t,i}= \sum_{j=1}^{k} v_jg_{j,i}$. So in order to make a substitution attack to $R_i$, it suffices to know the label $\sum_{t=0}^M s^{t} b_{t,i}$ for some $s\in \f$ not sent by the transmitter, then it is trivial to construct a tag $(v_1,v_2,\cdots,v_k)$ such that $\sum_{t=0}^M s^{t} b_{t,i}= \sum_{j=1}^{k} v_jg_{j,i}$.

Indeed, we will find that the security of the above authentication scheme depends on the hardness of finding the key matrix $A$ from a system of linear equations. Suppose a group of $K$ malicious receivers collaborate to recover $A$ and make a substitution attack. Without loss of generality, we assume that the malicious receivers are $R_1,R_2,\cdots,R_K$. Suppose $s_1,s_2,\cdots,s_M$ have been sent. Each $R_i$ has some information about the key $A$:
\begin{gather*}\label{equation1}
   \left(
     \begin{array}{ccccc}
      1& s_1 & s_1^2 & \cdots & s_1^{M} \\
      1& s_2 & s_2^2 & \cdots & s_2^{M}\\
      \vdots & \vdots & \vdots & \ddots & \vdots \\
     1& s_M & s_M^2 & \cdots & s_M^{M}\\
     \end{array}
   \right)\cdot A=
   \left(
     \begin{array}{cccc}
       L_1(s_1) & L_2(s_1) & \cdots & L_k(s_1) \\
        L_1(s_2) & L_2(s_2) & \cdots & L_k(s_2)\\
       \vdots & \vdots & \ddots & \vdots \\
       L_1(s_M) & L_2(s_M) & \cdots & L_k(s_M)\\
     \end{array}
   \right)
\end{gather*}
and
\begin{gather*}
    A\cdot \left(
                       \begin{array}{c}
                          g_{1,i} \\
                          g_{2,i} \\
                          \vdots\\
                          g_{k,i} \\
                        \end{array}
                      \right)=\left(
                       \begin{array}{c}
                          b_{0,i} \\
                          b_{1,i} \\
                          \vdots\\
                          b_{M,i} \\
                        \end{array}
                      \right)\ .
\end{gather*}
The group of malicious receivers combines their equations, and they get a system of linear equations
\begin{equation}\label{equation}
\left\{
  \begin{array}{c}
     \left(
     \begin{array}{ccccc}
      1& s_1 & s_1^2 & \cdots & s_1^{M} \\
      1& s_2 & s_2^2 & \cdots & s_2^{M}\\
      \vdots & \vdots & \vdots & \ddots & \vdots \\
     1& s_M & s_M^2 & \cdots & s_M^{M}\\
     \end{array}
   \right)\cdot A=
   \left(
     \begin{array}{cccc}
       L_1(s_1) & L_2(s_1) & \cdots & L_k(s_1) \\
        L_1(s_2) & L_2(s_2) & \cdots & L_k(s_2)\\
       \vdots & \vdots & \ddots & \vdots \\
       L_1(s_M) & L_2(s_M) & \cdots & L_k(s_M)\\
     \end{array}
   \right),
\\
    A\cdot \left(
  \begin{array}{cccc}
    g_{1,1} & g_{1,2} & \cdots & g_{1,K} \\
     g_{2,1} & g_{2,2} & \cdots & g_{2,K} \\
     \vdots & \vdots & \ddots & \vdots \\
    g_{k,1} & g_{k,2} & \cdots & g_{k,K}\\
  \end{array}
\right)=\left(
  \begin{array}{cccc}
    b_{0,1} & b_{0,2} & \cdots & b_{0,K} \\
     b_{1,1} & b_{1,2} & \cdots & b_{1,K} \\
     \vdots & \vdots & \ddots & \vdots \\
    b_{M,1} & b_{M,2} & \cdots & b_{M,K}\\
  \end{array}
\right)\ .
  \end{array}
\right.
\end{equation}

\begin{lem}\label{keylem}
Let $P$ be the subspace of $\f^k$ generated by $\{g_j\,|\,j=1,2,\cdots,K\}$, where $g_j$ represents the $j$-th column of the
generator matrix $G$. Suppose $K_0=\dim{P}\leqslant k-1$. Then there exists exact $q^{k-K_0}$ matrices $A$ satisfying the system of equations (\ref{equation}). \end{lem}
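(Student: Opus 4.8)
The plan is to count solutions by passing to the associated homogeneous system. Because the system (\ref{equation}) was assembled from the genuine key matrix, the true $A$ is itself a solution, so the system is consistent; hence its solution set is a coset $A + V_0$ of the solution space $V_0$ of the homogeneous version of (\ref{equation}), and the number of solutions equals $\# V_0 = q^{\dim_{\f} V_0}$. It therefore suffices to prove $\dim_{\f} V_0 = k - K_0$. I would record the homogeneous system explicitly: writing $a_0, a_1, \ldots, a_M \in \f^k$ for the rows of $A$ and $g_1, \ldots, g_K \in \f^k$ for the first $K$ columns of $G$, the two block equations become, respectively, $\sum_{t=0}^M s_i^t a_t = \vec{0}$ for $i = 1, \ldots, M$ (from the left Vandermonde factor) and $a_t \cdot g_j = 0$ for all $t$ and all $j = 1, \ldots, K$ (from the right factor).

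The key observation is that the two families of constraints decouple cleanly: the right-hand constraint acts on each row $a_t$ separately and forces $a_t$ to be orthogonal to every $g_j$, hence $a_t \in P^{\perp}$, a subspace of $\f^k$ of dimension $k - K_0$; meanwhile the left-hand constraint only relates the rows through their evaluations. So I would encode the rows as a single vector-valued polynomial $\mathbf{a}(x) = \sum_{t=0}^M a_t x^t$ of degree at most $M$ with coefficients in $P^{\perp}$, and note that the left-hand constraints say exactly $\mathbf{a}(s_i) = \vec{0}$ for $i = 1, \ldots, M$.

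To finish I would fix an $\f$-basis $e_1, \ldots, e_{k-K_0}$ of $P^{\perp}$ and write $\mathbf{a}(x) = \sum_{r=1}^{k-K_0} p_r(x)\, e_r$ with scalar polynomials $p_r \in \f[x]$ of degree at most $M$. Since the $e_r$ are linearly independent, $\mathbf{a}(s_i) = \vec{0}$ is equivalent to $p_r(s_i) = 0$ for every $r$ and every $i$, so the constraints split across the index $r$. For each $r$, the polynomial $p_r$ has degree at most $M$ and vanishes at the $M$ points $s_1, \ldots, s_M$; provided these are distinct, the evaluation map $\f[x]_{\leqslant M} \to \f^M$ is surjective (its matrix in the monomial basis is precisely the left Vandermonde factor, of rank $M$), so its kernel---the space of admissible $p_r$---is $1$-dimensional. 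Summing over the $k - K_0$ independent choices gives $\dim_{\f} V_0 = k - K_0$, whence exactly $q^{k-K_0}$ solutions.

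The main obstacle is organizing the computation so that the two-sided matrix equation untangles: one factor multiplies $A$ on the left (mixing the rows $a_t$ via a Vandermonde matrix) while the other multiplies on the right (acting within each row via inner products with the $g_j$). The polynomial reformulation is what makes these two actions commute conceptually, and the one genuinely arithmetic input is the distinctness of $s_1, \ldots, s_M$, which is exactly what guarantees the Vandermonde factor has full row rank $M$ and pins each $p_r$ down to a one-dimensional space; I would flag this distinctness hypothesis explicitly. Note also that the argument yields the stated count for every $K_0 \leqslant k$, the assumption $K_0 \leqslant k-1$ merely ensuring $k - K_0 \geqslant 1$, i.e.\ that the key is not uniquely recoverable.
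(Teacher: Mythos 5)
Your proof is correct, and it reaches the count by a genuinely different route than the paper. The paper flattens $A$ into a single column vector of length $k(M+1)$ and computes the rank of the resulting block coefficient matrix --- diagonal copies of $S_M$ stacked over the rows of the blocks $g_{1,i}\vec{I}_{M+1},\cdots,g_{k,i}\vec{I}_{M+1}$ --- asserting (rather tersely) that this rank equals $Mk+K_0$, and then concludes by counting free variables. You instead keep the matrix structure, pass to the homogeneous system, and decouple it: the right-hand constraints force every row of $A$ into $P^{\perp}$, and the left-hand constraints become the vanishing at $s_1,\cdots,s_M$ of a degree-$\leqslant M$ polynomial with coefficients in $P^{\perp}$, so the kernel is $P^{\perp}\otimes\ker(\mathrm{ev})$ and has dimension $(k-K_0)\cdot 1$. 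The two computations are rank--nullity duals of one another: the paper's row space is $\f^{k}\otimes W+P\otimes\f^{M+1}$, where $W$ is the row space of $S_M$, and your kernel is exactly its annihilator, so the same two facts (distinctness of the $s_i$ and $\dim P=K_0$) drive both arguments. Your version makes explicit three points the paper leaves implicit: consistency of the system (the genuine key matrix is a particular solution --- without this, a rank count alone would only show the number of solutions is $0$ or $q^{k-K_0}$); the role of the distinctness of $s_1,\cdots,s_M$, which is precisely what gives $\mathrm{rank}(S_M)=M$ and which the paper uses silently in its rank claim; and the fact that the hypothesis $K_0\leqslant k-1$ is irrelevant to the count itself. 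The paper's vectorized form is shorter once its rank assertion is granted and leads directly to the generalization stated in its Remark for two-sided systems $DX=C_1$, $XZ=C_2$; your coset-plus-kernel description is more transparent and is, in substance, what the proof of Theorem \ref{general} later exploits when it partitions the solution set of System (\ref{equation}) into classes of size $q$.
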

\begin{proof}
Denote
\begin{equation*}
 S_M=   \left(
      \begin{array}{ccccc}
        1 & s_1 & s_1^{2} & \cdots & s_1^{M} \\
        1 & s_2 & s_2^{2} & \cdots & s_2^{M} \\
        \vdots & \vdots & \vdots & \ddots & \vdots \\
         1 & s_M & s_M^{2} & \cdots & s_M^{M} \\
      \end{array}
    \right) \ .
\end{equation*}
Rewrite the matrix $A$ of variables $a_{i,j}$ as a single column of $k(M+1)$ variables. Then System~(\ref{equation}) becomes
 \begin{equation}\label{equ2}
    \left(
      \begin{array}{cccc}
        S_M &  &  &  \\
         & S_M &  &  \\
         &  & \ddots &  \\
         &  &  & S_M \\
        g_{1,1}\vec{I}_{M+1} & g_{2,1}\vec{I}_{M+1} & \cdots & g_{k,1}\vec{I}_{M+1} \\
        g_{1,2}\vec{I}_{M+1} & g_{2,2}\vec{I}_{M+1} & \cdots & g_{k,2}\vec{I}_{M+1} \\
        \vdots & \vdots & \ddots & \vdots \\
        g_{1,K}\vec{I}_{M+1} & g_{2,K}\vec{I}_{M+1} & \cdots & g_{k,K}\vec{I}_{M+1} \\
      \end{array}
    \right)\cdot
    \left(
      \begin{array}{c}
        a_{0,1} \\
        a_{1,1} \\
        \vdots \\
        a_{M,1} \\
        a_{0,2}\\
        a_{1,2}\\
        \vdots \\
        a_{M,2}\\
        \vdots \\
         a_{0,k}\\
         a_{1,k}\\
         \vdots \\
         a_{M,k}\\
      \end{array}
    \right)=T
    \end{equation}
where $\vec{I}_{M+1}$ is the identity matrix with rank ($M+1$) and $T$ is the column vector of constants in System (\ref{equation}) with proper order. Notice that the space generated by rows of $S_M$ is contained in the space $\f^{M+1}$ generated by $g_{i,j}\vec{I}_{M+1}$ if $g_{i,j}\neq 0$.
  So the rank of the big matrix of coefficients in System~(\ref{equ2}) equals to
  \[
      M\cdot k+K_0
  \]
which is less than $k(M+1)$, the number of variables. So System (\ref{equ2}) has $q^{k(M+1)-kM-K_0}=q^{k-K_0}$ solutions, i.e., System (\ref{equation}) has $q^{k-K_0}$ solutions. 
\end{proof}
\begin{rem}
 In \cite{SW}, they gave a constructive proof of Lemma \ref{keylem} in the case that $G$ is of the form~(\ref{rs}).
The method here can be used for a general class of systems of linear equations over a field $F$:
 \begin{equation*}
    \left\{
      \begin{array}{c}
       D\cdot X=C_1 \\
        X\cdot Z=C_2
      \end{array}
    \right.
 \end{equation*}
where $X$ is a $m\times n$ matrix of variables, the coefficient matrices $D\in F^{g\times m}$ with rank $\leqslant m-1$ and $Z\in F^{n\times h}$ with rank $\leqslant n-1$, the constant matrices $C_1\in F^{g\times n}$ and $C_2\in F^{m\times h}$. Then solutions of the system in $F^{m\times n}$ has $(m-g)(n-h)$-dimensional hypersurface in the space $F^{m\times n}$.
\end{rem}

Note that if $C\,[n,k,d=n-k+1]$ is an MDS code, e.g., Reed-Solomon code, then whenever $K\leqslant k-1$ the vectors in any $K$-subset of columns of $G$ are linearly independent.

By Lemma \ref{keylem}, the security of our authentication scheme follows.
\begin{thm}\label{general}
The scheme we constructed above is an unconditionally secure multi-receiver authentication scheme against a coalition of up to ($d(C^\bot)-2$) malicious receivers in which every key can be used to authentication up to $M$ messages.
\end{thm}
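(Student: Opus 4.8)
The plan is to reduce the security statement to Lemma~\ref{keylem} together with the standard coding-theoretic fact that the generator matrix $G$ of $C$ is a parity-check matrix of $C^\bot$. First I would record that, since $(C^\bot)^\bot=C$, the minimum distance $d(C^\bot)$ equals the smallest number of linearly dependent columns of $G$; equivalently, every set of at most $d(C^\bot)-1$ columns of $G$ is linearly independent. Applying the Singleton bound to the $[V,V-k]$ code $C^\bot$ also gives $d(C^\bot)\leqslant k+1$, so the hypothesis $K_0\leqslant k-1$ of Lemma~\ref{keylem} is automatically satisfied whenever the coalition has size $K\leqslant d(C^\bot)-2$.

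Next I would fix a coalition of $K\leqslant d(C^\bot)-2$ receivers, say $R_1,\dots,R_K$, and an arbitrary target $R_i$ with $i\notin\{1,\dots,K\}$. Since $K+1\leqslant d(C^\bot)-1$, the columns $g_1,\dots,g_K,g_i$ are linearly independent, so in particular $g_i\notin P:=\mathrm{span}(g_1,\dots,g_K)$ and $K_0=\dim P=K$. By the verification rule, forging a message $s$ for $R_i$ amounts to producing its label $\sum_{t=0}^M s^t b_{t,i}$, i.e. to knowing the entry $A g_i$ of $B=AG$; so it suffices to show that this label is not pinned down by the coalition's information, and in fact is uniformly distributed over $\f$.

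For this I would analyse the homogeneous system underlying~(\ref{equation}) in the worst case of $M$ distinct sent messages $s_1,\dots,s_M$. The block $S_M A=0$ forces every column of $A$ to lie in the one-dimensional space $\ker S_M=\langle v\rangle$, where $v=(v_0,\dots,v_M)$ encodes the polynomial $p(X)=\sum_t v_t X^t$ that vanishes exactly at $s_1,\dots,s_M$. Writing $A=(\lambda_1 v,\dots,\lambda_k v)$, the block $A\,G_{\mathrm{coal}}=0$ is then equivalent to $\lambda\cdot g_j=0$ for $j=1,\dots,K$, i.e. $\lambda\in P^\perp$. Hence on the solution coset $A g_i=(\lambda\cdot g_i)\,v$, and because $g_i\notin P$ the functional $\lambda\mapsto\lambda\cdot g_i$ is nontrivial on $P^\perp$, so $\lambda\cdot g_i$ runs uniformly over $\f$ as $A$ runs uniformly over the coset of size $q^{k-K_0}\geqslant q$ guaranteed by Lemma~\ref{keylem}. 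The label of $R_i$ for a fresh message $s\notin\{s_1,\dots,s_M\}$ is therefore $(\lambda\cdot g_i)\,p(s)$ with $p(s)\neq 0$, hence uniform over $\f$; the coalition's best strategy succeeds with probability exactly $1/q$, which is the unconditional-security bound.

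The step I expect to be the main obstacle is the last one: it is not enough to quote Lemma~\ref{keylem} to learn that $A$ is undetermined, one must verify that the particular scalar $A g_i$, and then its evaluation at the new message, genuinely varies and in fact varies uniformly over the solution set. This is precisely where the two hypotheses enter essentially --- the condition $g_i\notin P$ (guaranteed by $K\leqslant d(C^\bot)-2$) makes the relevant linear functional nontrivial, and the freshness of $s$ makes $p(s)\neq 0$ so that no degeneration collapses the label to a constant. I would also note that taking $M$ sent messages is genuinely the worst case, since fewer messages enlarge $\ker S_M$ and leave the adversary strictly less information.
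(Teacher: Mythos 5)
Your proposal is correct, but it proves the key uniformity claim by a genuinely different route than the paper. The paper also reduces everything to Lemma~\ref{keylem}, but it establishes uniformity of the target label by a partition argument: it augments System~(\ref{equation}) with the requirement that $A$ take a prescribed value $P$ on $k-d(C^\bot)+1$ further columns of $G$ that complete $g_1,\dots,g_{K+1}$ to a basis of $\f^k$, invokes Lemma~\ref{keylem} a second time to get exactly $q$ solutions for each choice of $P$, and shows the label map is injective on each such part because the $(M+1)\times(M+1)$ Vandermonde matrix in $s_1,\dots,s_{M+1}$ and the $k\times k$ matrix of the chosen basis columns are both invertible; surjectivity and the fiber count $q^{k-d(C^\bot)+1}$ then follow from this partition into $q$-element blocks. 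You instead parametrize the homogeneous solution space explicitly: every column of $A$ must lie in the one-dimensional $\ker S_M=\langle v\rangle$, so the homogeneous solutions are exactly the rank-one matrices $v\lambda^{T}$ with $\lambda\in P^{\perp}$, and the label of $R_i$ at a fresh $s$ becomes an affine function of $\lambda$ whose linear part is $\lambda\mapsto p(s)(\lambda\cdot g_i)$, nontrivial because $g_i\notin P=(P^{\perp})^{\perp}$ and $p(s)\neq 0$; equidistribution of a nontrivial linear functional on $P^\perp$ then yields uniformity in one stroke. Your argument is more structural and makes visible exactly where the two hypotheses enter (the coalition bound forces $g_i\notin P$, freshness forces $p(s)\neq 0$); it also handles an arbitrary target $R_i$ and any $K\leqslant d(C^\bot)-2$ directly, and it explicitly verifies via the Singleton bound on $C^\bot$ that the hypothesis $K_0\leqslant k-1$ of Lemma~\ref{keylem} holds, a point the paper leaves implicit. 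The paper's approach, by contrast, needs nothing about the kernel beyond dimension counts, which is why it can simply recycle Lemma~\ref{keylem}.

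One small correction: on the solution coset you should write $Ag_i=A_0g_i+(\lambda\cdot g_i)\,v$ for a particular solution $A_0$, so the label is $\ell_0+p(s)(\lambda\cdot g_i)$, where $\ell_0$ is the label determined by $A_0$, rather than $(\lambda\cdot g_i)\,p(s)$ alone. This costs nothing, since an additive constant does not affect uniformity, but as written your formula holds only when $A_0g_i$ happens to contribute zero.
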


\begin{proof} Suppose the source receiver has sent messages $s_1,s_2,\cdots,s_{M}$.
It is enough to consider the case that $K=d(C^\bot)-2$ malicious receivers $R_1,\cdots,R_K$ have received the $M$ messages, since in this case they know the most information about the key matrix $A$.

What they try to do is to guess the label $b_{0,K+1}+b_{1,K+1}s_{M+1}+b_{2,K+1}s_{M+1}^{2}+\cdots+b_{M,K+1}s_{M+1}^{M}$ for some $s_{M+1}\notin \{s_1,s_2,\cdots,s_{M}\}$ and construct a vector $(v_1,v_2,\cdots,v_k)$ such that
\[
    \sum_{i=1}^k v_{i}g_{i,K+1}=b_{0,K+1}+b_{1,K+1}s_{M+1}+b_{2,K+1}s_{M+1}^{2}+\cdots+b_{M,K+1}s_{M+1}^{M}\ .
\]
Then the fake message $[s_{M+1}, v_1,v_2,\cdots,v_k]$ can be accepted by $R_{K+1}$.

Because any $K=d(C^\bot)-2$ columns of the generator matrix $G$ is linearly independent over $\f$, otherwise there exist $x_1,\cdots,x_{K}\in \f$ such that $\sum_{j=1}^Kx_j\vec{g}_{j}=\vec{0}$ where $\vec{g}_{j}$ is the $j$-th column of $G$, then the dual code $C^\bot$ will have a codeword $(x_1,\cdots,x_{K},0,\cdots,0)$ with Hamming weight $\leqslant d(C^\bot)-2$ which is a contradiction. By Lemma \ref{keylem}, there exists $q^{k-d(C^\bot)+2}$ matrices $A$ satisfying the system of equations~(\ref{equation}).

For any $s_{M+1}\notin \{s_1,s_2,\cdots,s_{M}\}$, we define the label map
\begin{equation*}
    \begin{array}{rccc}
      \varphi_{s_{M+1}}: & \{\textrm{Solutions of System (\ref{equation})}\} & \longrightarrow & \f \\
       & A & \mapsto & (1,s_{M+1},s_{M+1}^2,\cdots,s_{M+1}^M)A\left(\begin{array}{c}
                                                                g_{1,K+1} \\
                                                                g_{2,K+1} \\
                                                                \vdots \\
                                                                g_{k,K+1}
                                                              \end{array}\right)\ .
    \end{array}
\end{equation*}
Then we claim:
\begin{description}
\item[(1)] $\varphi_{s_{M+1}}$ is surjective.
\item[(2)] for any $y\in \f$, the number of the inverse image of $y$ is $\#\varphi_{s_{M+1}}^{-1}(y)=q^{k-d(C^\bot)+1}$.
\end{description}
So the information held by the colluders allows them to calculate $q$ equally likely different labels for $s_{M+1}$ and hence their probability of success is $1/q$ which is equal to that of guessing a label $b_{0,K+1}+b_{1,K+1}s_{M+1}+b_{2,K+1}s_{M+1}^{2}+\cdots+b_{M,K+1}s_{M+1}^{M}$ for $s_{M+1}$ randomly from $\f$. And hence we finish the proof of the theorem.

Next, we prove our claim. As $K+1=d(C^\bot)-1$, $g_1,g_2,\cdots,g_{K+1}$ is linearly independent over $\f$, otherwise the dual code $C^\bot$ will have a codeword with Hamming weight $\leqslant d(C^\bot)-1$ which is impossible by the definition of minimum distance of a code. Then choose $k-K-1=k-d(C^\bot)+1$ extra columns of $G$ such that they combining with $g_1,g_2,\cdots,g_{K+1}$ form a basis of $\f^k$. Without loss of generality, we assume the first $k$ columns of $G$ is linearly independent of $\f$. For any $P\in \f^{(M+1)\times(k-d(C^\bot)+1)}$, the system of linear equations
\begin{equation}\label{equ3}
\left\{
  \begin{array}{rl}
     \left(
     \begin{array}{ccccc}
      1& s_1 & s_1^2 & \cdots & s_1^{M} \\
      1& s_2 & s_2^2 & \cdots & s_2^{M}\\
      \vdots & \vdots & \vdots & \ddots & \vdots \\
     1& s_M & s_M^2 & \cdots & s_M^{M}\\
     \end{array}
   \right)\cdot A=&
   \left(
     \begin{array}{cccc}
       L_1(s_1) & L_2(s_1) & \cdots & L_k(s_1) \\
        L_1(s_2) & L_2(s_2) & \cdots & L_k(s_2)\\
       \vdots & \vdots & \ddots & \vdots \\
       L_1(s_M) & L_2(s_M) & \cdots & L_k(s_M)\\
     \end{array}
   \right),
\\
    A\cdot \left(
  \begin{array}{cccc}
    g_{1,1} & g_{1,2} & \cdots & g_{1,K} \\
     g_{2,1} & g_{2,2} & \cdots & g_{2,K} \\
     \vdots & \vdots & \ddots & \vdots \\
    g_{k,1} & g_{k,2} & \cdots & g_{k,K}\\
  \end{array}
\right)=&\left(
  \begin{array}{cccc}
    b_{0,1} & b_{0,2} & \cdots & b_{0,K} \\
     b_{1,1} & b_{1,2} & \cdots & b_{1,K} \\
     \vdots & \vdots & \ddots & \vdots \\
    b_{M,1} & b_{M,2} & \cdots & b_{M,K}\\
  \end{array}
\right),
\\
 A\cdot \left(
  \begin{array}{cccc}
    g_{1,K+2} & g_{1,K+3} & \cdots & g_{1,k} \\
     g_{2,K+2} & g_{2,K+3} & \cdots & g_{2,k} \\
     \vdots & \vdots & \ddots & \vdots \\
    g_{k,K+2} & g_{k,K+3} & \cdots & g_{k,k}\\
  \end{array}
\right)=&P
\ ,
  \end{array}
\right.
\end{equation}
 has $q$ solutions by Lemma \ref{keylem}, saying $A_1,A_2,\cdots,A_q$. The solutions $A_1,A_2,\cdots,A_q$ are also solutions of System~ (\ref{equation}). Next, we show
\[
   \{\varphi_{s_{M+1}}(A_j)\,|\,j=1,2,\cdots,q\}=\f\ .
\]
Otherwise, there are two solutions $A_{j_1}$ and $A_{j_2}$ such that
\begin{equation*}
    (1,s_{M+1},s_{M+1}^2,\cdots,s_{M+1}^M)A_{j_1}\left(\begin{array}{c}
                                                                g_{1,K+1} \\
                                                                g_{2,K+1} \\
                                                                \vdots \\
                                                                g_{k,K+1}
                                                              \end{array}\right)=(1,s_{M+1},s_{M+1}^2,\cdots,s_{M+1}^M)A_{j_2}\left(\begin{array}{c}
                                                                g_{1,K+1} \\
                                                                g_{2,K+1} \\
                                                                \vdots \\
                                                                g_{k,K+1}
                                                              \end{array}\right)\ .
\end{equation*}
Then we have
\begin{equation*}
\begin{array}{rl}
    & \left(
     \begin{array}{ccccc}
      1& s_1 & s_1^2 & \cdots & s_1^{M} \\
      1& s_2 & s_2^2 & \cdots & s_2^{M}\\
      \vdots & \vdots & \vdots & \ddots & \vdots \\
     1& s_{M+1} & s_{M+1}^2 & \cdots & s_{M+1}^{M}\\
     \end{array}
   \right) A_{j_1} \left(
  \begin{array}{cccc}
    g_{1,1} & g_{1,2} & \cdots & g_{1,k} \\
     g_{2,1} & g_{2,2} & \cdots & g_{2,k} \\
     \vdots & \vdots & \ddots & \vdots \\
    g_{k,1} & g_{k,2} & \cdots & g_{k,k}\\
  \end{array}
\right)\\
= &
     \left(
     \begin{array}{ccccc}
      1& s_1 & s_1^2 & \cdots & s_1^{M} \\
      1& s_2 & s_2^2 & \cdots & s_2^{M}\\
      \vdots & \vdots & \vdots & \ddots & \vdots \\
     1& s_{M+1} & s_{M+1}^2 & \cdots & s_{M+1}^{M}\\
     \end{array}
   \right) A_{j_2} \left(
  \begin{array}{cccc}
    g_{1,1} & g_{1,2} & \cdots & g_{1,k} \\
     g_{2,1} & g_{2,2} & \cdots & g_{2,k} \\
     \vdots & \vdots & \ddots & \vdots \\
    g_{k,1} & g_{k,2} & \cdots & g_{k,k}\\
  \end{array}
\right) \ .
\end{array}
\end{equation*}
But matrices
\begin{equation*}
     \left(
     \begin{array}{ccccc}
      1& s_1 & s_1^2 & \cdots & s_1^{M} \\
      1& s_2 & s_2^2 & \cdots & s_2^{M}\\
      \vdots & \vdots & \vdots & \ddots & \vdots \\
     1& s_{M+1} & s_{M+1}^2 & \cdots & s_{M+1}^{M}\\
     \end{array}
   \right),\qquad
 \left(
  \begin{array}{cccc}
    g_{1,1} & g_{1,2} & \cdots & g_{1,k} \\
     g_{2,1} & g_{2,2} & \cdots & g_{2,k} \\
     \vdots & \vdots & \ddots & \vdots \\
    g_{k,1} & g_{k,2} & \cdots & g_{k,k}\\
  \end{array}
\right)
\end{equation*}
are invertible. So $A_{j_1}=A_{j_2}$ which contradicts to the condition $A_{j_1}\neq A_{j_2}$. And hence, the statement (1) holds.

Next, we prove (2).
Any one solution of System~(\ref{equation}) gives one $P\in \f^{(M+1)\times(k-d(C^\bot)+1)}$, while corresponding to such a $P$ there are $q$ solutions of System~(\ref{equation}) from the proof of (1). In this way, we partition solutions of System~(\ref{equation}) into $q^{k-d(C^\bot)+1}$ parts such that each part contains $q$ elements. Also from the proof of (1), the image of each part under $\varphi_{s_{M+1}}$ is $\f$. So for any $y\in \f$, the number of the inverse image of $y$ is $\#\varphi_{s_{M+1}}^{-1}(y)=q^{k-d(C^\bot)+1}$.

\end{proof}

\begin{rem}\label{motivation}
From the proofs of Lemma \ref{keylem} and Theorem \ref{general}, the coalition of malicious receivers $B$ can successfully make a substitution attack to the receiver $R_i$ if and only if $\vec{g}_i$ is contained in the subspace of $\f^k$ generated by $\{\vec{g}_j\,|\,j\in B\}$, where $\vec{g}_j$ represents the $j$-th column of the
generator matrix $G$. In this case, they can recover the private key of $R_i$. This is the motivation of the next section.
\end{rem}
Next, we give a toy example to illustrate Lemma \ref{keylem} and Theorem \ref{general}.
\begin{exm}\label{exm1}
Let $\f=\fl$. The $M=3$ messages sent are $s_1=1,s_2=2,s_3=4$. The $C$ is a systematic code with the generator matrix
\begin{equation*}
    G=\left(
        \begin{array}{ccccccccc}
          1&0&0 &0& 0& 1& 2& 4& 0\\
          0& 1& 0& 0 &0& 2 &2& 3& 2 \\
         0& 0& 1& 0 &0& 3 &1& 3& 4 \\
          0&0 &0 &1 &0 &4 &0 &0 &2 \\
          0& 0& 0& 0& 1& 2& 1& 1& 4 \\
        \end{array}
      \right)\ .
\end{equation*}
One can check that the dual code $C^\bot$ has minimum distance $d(C^\bot)=5$. The trusted authority randomly chooses $A\in \fl^{4\times5}$, for instance,
\begin{equation*}
    A=\left(
        \begin{array}{ccccc}
          3& 2& 2& 0& 2 \\
         0& 4& 3& 0& 2 \\
          0& 1& 2& 3& 1 \\
         3& 3& 0& 1& 3\\
        \end{array}
      \right)\ .
\end{equation*}
Then the trusted authority computes
\begin{equation*}
    B=AG=\left(
        \begin{array}{ccccccccc}
         3& 2& 2& 0& 2& 2& 4& 1& 0 \\
         0& 4& 3& 0& 2& 1& 3& 3& 3 \\
          0& 1& 2& 3& 1&2& 0& 0& 0 \\
         3& 3& 0& 1& 3& 4& 0& 4& 0\\
        \end{array}
      \right)
\end{equation*}
and distributes the $i$-th column of $B$ to the receiver $R_i$ as his private key.

Suppose $R_1,R_2,R_3$ are corrupted and they have seen the authenticated messages
\begin{equation*}
    \left(
       \begin{array}{c}
         x_1 \\
         x_2 \\
         x_3 \\
       \end{array}
     \right)=\left(
               \begin{array}{cccccc}
                 1 &  1& 0& 2& 4& 3 \\
                 2 & 2& 3& 1& 0& 4 \\
                 3 & 4& 4& 4& 4& 3 \\
               \end{array}
             \right)\ ,
\end{equation*}
 then they want to substitute one of the authenticated messages during the transmission by a new codeword $[s,L(s)]$ that can be accepted by one of the other receivers.
They have information about the key matrix $A$:
\begin{equation}\label{equ4}
\left\{
  \begin{array}{c}
      \left(
      \begin{array}{cccc}
        1& 1& 1& 1 \\
       1& 2& 4& 3 \\
       1& 3& 4& 2 \\
      \end{array}
    \right)A=\left(
               \begin{array}{ccccc}
                 1& 0& 2& 4& 3\\
               2& 3& 1& 0& 4 \\
                4& 4& 4& 4& 3 \\
               \end{array}
             \right), \\
     A\left(
        \begin{array}{ccc}
         1&0&0\\
         0& 1& 0 \\
         0& 0& 1 \\
         0&0 &0 \\
         0&0 &0 \\
        \end{array}
      \right)=\left(
                \begin{array}{ccc}
                  3& 2& 2 \\
                 0& 4& 3 \\
                 0& 1& 2 \\
                 3& 3& 0 \\
                \end{array}
              \right)\ .
  \end{array}
\right.
\end{equation}
This system of linear equations has $25$ solutions
\begin{equation*}
    \left(
      \begin{array}{ccccc}
        3& 0& 0& 3& 2 \\
        4& 1& 3& 2& 3 \\
        2& 0& 1& 4& 4 \\
        0& 1& 3& 0& 4 \\
      \end{array}
    \right)+a_1\left(
                 \begin{array}{ccccc}
                   0& 0& 0& 0& 0 \\
                  0& 0 &0 &0& 0 \\
                   0& 0& 1& 4 &1 \\
                  4& 0& 0& 0& 0 \\
                 \end{array}
               \right)+a_2\left(
                 \begin{array}{ccccc}
                   0& 0& 0& 0& 0 \\
                  0& 0 &0 &0& 0 \\
                   0& 0 &0 &0& 0 \\
                  0& 1& 4& 1& 4 \\
                 \end{array}
               \right)
\end{equation*}
where $a_1,a_2\in \fl$. For $s_4=4$ and any $i=4,5,\cdots,9$, we have the label map
\begin{equation*}
    \begin{array}{rccc}
      \varphi_{s_{4},R_i}: & \{\textrm{Solutions of System (\ref{equ4})}\} & \longrightarrow & \f \\
       & A & \mapsto & (1,s_{M+1},s_{M+1}^2,\cdots,s_{M+1}^M)A\left(\begin{array}{c}
                                                                g_{1,i} \\
                                                                g_{2,i} \\
                                                                \vdots \\
                                                                g_{k,i}
                                                              \end{array}\right)\ .
    \end{array}
\end{equation*}
Let $\varphi_{s_{4}}=(\varphi_{s_{4},R_4},\varphi_{s_{4},R_5},\cdots,\varphi_{s_{4},R_9})$. Then the images of $\varphi_{s_{4}}$ are
\begin{table}
\begin{equation*}
    \begin{tabular}{|c|c|c|c|c|}
       \hline
(0, 4, 3, 2, 0, 2)&
(0, 3, 1, 1, 4, 3)&
(0, 2, 4, 0, 3, 4)&
(0, 1, 2, 4, 2, 0)&
(0, 0, 0, 3, 1, 1)\\
(4, 4, 4, 2, 0, 0)&
(4, 3, 2, 1, 4, 1)&
(4, 2, 0, 0, 3, 2)&
(4, 1, 3, 4, 2, 3)&
(4, 0, 1, 3, 1, 4)\\
(3, 4, 0, 2, 0, 3)&
(3, 3, 3, 1, 4, 4)&
(3, 2, 1, 0, 3, 0)&
(3, 1, 4, 4, 2, 1)&
(3, 0, 2, 3, 1, 2)\\
(2, 4, 1, 2, 0, 1)&
(2, 3, 4, 1, 4, 2)&
(2, 2, 2, 0, 3, 3)&
(2, 1, 0, 4, 2, 4)&
(2, 0, 3, 3, 1, 0)\\
       (1, 4, 2, 2, 0, 4)&
(1, 3, 0, 1, 4, 0)&
(1, 2, 3, 0, 3, 1)&
(1, 1, 1, 4, 2, 2)&
(1, 0, 4, 3, 1, 3)\\
       \hline
     \end{tabular}
\end{equation*}
\end{table}\\
Notice that for any $i\geqslant 4$, $\varphi_{s_{4},R_i}$ is surjective and for any $y\in \f$, the number of the inverse image of $y$ is $\#\varphi_{s_4}^{-1}(y)=5$. One can check the properties of $\varphi_{s_{4},R_i}$ about surjection and uniform distribution of the images for $s_4=0$ also hold.

Actually, we can verify that even the coalition of $R_1,R_2,R_3,R_4$ can successfully generate a fraudulent codeword $[s_4,L(s_4)]$ for any other $R_{i}$ still only in a probability $1/5$ which is the success probability of randomly choosing a label from $\fl$ for a fake message.
\end{exm}
\section{Code-based Authentication Scheme and Minimal Codewords}
In the previous section, we considered that any coalition of $K$ malicious receivers can not obtain any information about any other receiver's label to make a substitution attack. To consider a weak point, we propose that for a fixed receiver $R_{i}$, what a coalition of malicious receivers that can not get any information of the label of $R_{i}$. By Theorem \ref{general}, we have seen that any coalition of up to $(d(C^\bot)-2)$ malicious receivers can not generate a valid codeword $[s,L(s)]$ for $R_{i}$ in a probability better than guessing a label from $\f$ randomly for the fake message $s$.

Denote $[V]=\{1,2,\cdots,V\}$ and $\p=\{R_{1},R_{2},\cdots,R_{V}\}$. Without any confusion, we identify the index set $\{1,2,\cdots,V\}$ and the receiver set $\{R_{1},R_{2},\cdots,R_{V}\}$.
\begin{defn}
A subset of $V-1$ receivers $\p\setminus\{R_i\}$ is call an \emph{adversary group} to $R_{i}$ if their coalition can not obtain any information of the label of $R_{i}$ when they want to make a substitution attack to $R_i$.
Define $t_i(C)$ to be the largest integer $\tau_i$ such that any subset $A\subseteq\p\setminus\{R_i\}$ with cardinality $\tau_i$ is an adversary group to $R_{i}$.
\end{defn}
\begin{defn}
A subset of $\p\setminus\{R_i\}$ that can successfully make a substitution attack to $R_{i}$ is call a \emph{substitution group} to $R_{i}$. Moreover, a substitution group is call \emph{minimal} if any one receiver is removed from the group, then the rests can not obtain any information of the label of $R_{i}$. Define $r_i(C)$ to be the smallest integer $\rho_i$ such that any subset $B\subseteq\p\setminus\{R_i\}$ with cardinality $\rho_i$ is a substitution group to $R_{i}$.
\end{defn}

 For any $A\subseteq[V]$, $\pi_A$ is the projection of $\f^V$ to $\f^{|A|}$ defined by
\[
    \pi_A((x_1,x_2,\cdots,x_V))=(x_j)_{j\in A},
\]
for any $(x_1,x_2,\cdots,x_V)\in\f^V$. And denote by $ \pi_i= \pi_{\{i\}}$ for short. For any receiver $R_{i}$, the substitution groups to $R_{i}$ are completely characterized as follows.

\begin{prop}\label{reconstruction}
For any receiver $R_{i}$, the following conditions are equivalent:
\begin{description}
\item[(i)] $B\subseteq\p\setminus\{R_i\}$ is a substitution group to $R_{i}$;

\item[(ii)] $\vec{g}_i$ is contained in the subspace of $\f^k$ generated by $\{\vec{g}_j\,|\,j\in B\}$, where $\vec{g}_j$ represents the $j$-th column of the generator matrix $G$;

\item[(iii)] there exists a codeword $\vec{c}\in C^\bot$ such that
\[
   \pi_i(\vec{c})=1\qquad \mbox{and}\qquad\pi_{B^c}(\vec{c})= \vec{0}\ ,
\]
where $B^c=(\p\setminus\{R_i\})\setminus B$ is the complement of $B$ in $\p\setminus\{R_i\}$;

\item[(iv)] there is an $\f$-linear map
    $$f_{B,i}: \pi_B(C)\longrightarrow \f$$
such that $f_{B,i}(\pi_B(\vec{c}))=\pi_i(\vec{c})$ for all $\vec{c}\in C$;

\item[(v)] there is no codeword $\vec{c}\in C$ such that
\[
   \pi_i(\vec{c})=1\qquad \mbox{and}\qquad\pi_{B}(\vec{c})= \vec{0}\ .
\]
\end{description}
\end{prop}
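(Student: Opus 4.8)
My plan is to prove the equivalences by establishing a cycle, together with a couple of direct cross-links, rather than proving all $\binom{5}{2}$ pairwise implications. The natural backbone is $(i)\Leftrightarrow(ii)$, then $(ii)\Leftrightarrow(iii)$, then $(iii)\Leftrightarrow(v)$ (via complementation/negation), and finally $(ii)\Leftrightarrow(iv)$, which is the one genuinely coding-theoretic reformulation. I would open by recalling Remark~\ref{motivation}, which already does the heavy lifting for $(i)\Leftrightarrow(ii)$: the colluders in $B$ can reconstruct the label of $R_i$ for an unsent message precisely when they can recover the relevant linear functional on the key matrix $A$, and by Lemma~\ref{keylem} and the proof of Theorem~\ref{general} this happens if and only if $\vec{g}_i$ lies in $\mathrm{span}_{\f}\{\vec{g}_j : j\in B\}$. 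So the first step is just to cite and unpack that remark carefully, spelling out that ``obtaining information about the label'' is equivalent to $\vec{g}_i$ being linearly dependent on the columns indexed by $B$.

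For $(ii)\Leftrightarrow(iii)$ I would use the standard duality between linear dependence of columns of $G$ and codewords of $C^{\bot}$. Concretely, $\vec{g}_i\in\mathrm{span}\{\vec{g}_j:j\in B\}$ means there are scalars $\lambda_j$ with $\vec{g}_i=\sum_{j\in B}\lambda_j\vec{g}_j$, i.e. $\sum_{j\in B}\lambda_j\vec{g}_j-\vec{g}_i=\vec{0}$ in $\f^k$. Reading the coefficient vector $\vec{c}$ defined by $c_i=-1$, $c_j=\lambda_j$ for $j\in B$, and $c_\ell=0$ otherwise, the relation $\sum_\ell c_\ell\vec{g}_\ell=\vec 0$ says exactly that $\vec c\in C^{\bot}$ (since the columns of $G$ evaluated against a vector give membership in the dual). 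Rescaling $\vec c$ by $-1$ normalizes the $i$-th coordinate to $1$, and by construction the support of $\vec c$ avoids $B^c$, giving $\pi_i(\vec c)=1$ and $\pi_{B^c}(\vec c)=\vec 0$; conversely such a codeword yields the dependence relation. This step is mechanical once the column/dual dictionary is set up.

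The implication $(iii)\Leftrightarrow(v)$ is the cleanest: it is a pure orthogonality/annihilator statement. A codeword $\vec c\in C^{\bot}$ with $\pi_i(\vec c)=1$, $\pi_{B^c}(\vec c)=\vec 0$ exists if and only if the linear functional ``evaluate the $i$-th coordinate'' is expressible using only the coordinates in $B$ on all of $C$ --- and the obstruction to this is precisely the existence of a codeword $\vec c\in C$ vanishing on $B\cup\{\text{the split}\}$ but with $\pi_i(\vec c)=1$. I would phrase this as: $(iii)$ fails iff the functional $\pi_i|_C$ is \emph{not} determined by $\pi_B|_C$, iff there is $\vec c\in C$ with $\pi_B(\vec c)=\vec 0$ but $\pi_i(\vec c)\neq 0$ (which after scaling gives $\pi_i(\vec c)=1$), which is exactly the negation of $(v)$. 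This is the familiar fact that $C$ and $C^{\bot}$ annihilate each other, so a functional on $C$ factors through a coordinate projection iff the dual contains a witnessing codeword. Finally, $(iv)$ is just the functional-language restatement of $(v)$ in the positive direction: the well-definedness of $f_{B,i}$ on $\pi_B(C)$ is equivalent to saying $\pi_B(\vec c)=\vec 0\Rightarrow\pi_i(\vec c)=0$, so $(iv)\Leftrightarrow(v)$ (hence $(iv)\Leftrightarrow(iii)$) is immediate.

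The main obstacle, and the step I would write most carefully, is making the $(iii)\Leftrightarrow(v)$ duality airtight, because it is tempting to conflate ``vanishing on $B^c$'' in the dual with ``vanishing on $B$'' in the code, and the roles of $C$ versus $C^{\bot}$ and of $B$ versus its complement within $\p\setminus\{R_i\}$ must be tracked exactly. The cleanest way to avoid sign and index confusion is to fix, once and for all at the start of the proof, the dictionary $\vec c\in C^{\bot}\iff \sum_{\ell=1}^{V}c_\ell\,\vec g_\ell=\vec 0$ and the annihilator identity $(\pi_S(C))^{\perp}\cong\{\vec c\in C^{\bot}: \mathrm{supp}(\vec c)\subseteq S\}$ for a coordinate set $S$, then apply it with $S=B\cup\{i\}$; everything else follows by substitution.
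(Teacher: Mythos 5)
Your proposal is correct, and its backbone is essentially the paper's: cite Remark~\ref{motivation} for (i)$\Leftrightarrow$(ii), use the dictionary $\vec{c}\in C^\bot \Leftrightarrow \sum_{\ell}c_\ell\vec{g}_\ell=\vec{0}$ to identify dependence of $\vec{g}_i$ on $\{\vec{g}_j\,|\,j\in B\}$ with dual codewords supported in $B\cup\{i\}$ for (ii)$\Leftrightarrow$(iii), and settle the rest by linear-algebra duality. The one structural difference is how the triangle (iii)--(iv)--(v) is traversed. The paper proves (iii)$\Rightarrow$(iv) by the explicit formula $f_{B,i}(\pi_B(\vec{c}))=-\sum_{j\in B}\pi_j(\vec{y})\pi_j(\vec{c})$, proves (iv)$\Rightarrow$(iii) by taking the coefficient vector of $\phi_{B,i}=\pi_i-f_{B,i}$, and then gets (iv)$\Leftrightarrow$(v) from well-definedness of $f_{B,i}$; you instead prove (iii)$\Leftrightarrow$(v) in one step from the annihilator identity $(\pi_S(C))^{\perp}\cong\{\vec{c}\in C^{\perp}\,|\,\mathrm{supp}(\vec{c})\subseteq S\}$ with $S=B\cup\{i\}$ together with double annihilation, and then attach (iv) via well-definedness. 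These are two dressings of the same duality fact: your route is more compact to state, but the double-annihilator step silently performs the extension of a linear functional from the subspace $\pi_B(C)$ to all of $\f^{|B|}$, which is exactly the point that the paper's explicit constructions of $f_{B,i}$ and $\phi_{B,i}$ make visible. (The paper also inserts a small preliminary --- the existence of $\vec{c}\in C^\bot$ with $\pi_i(\vec{c})=1$, deduced from $d(C)\geqslant 2$ --- which your argument does not actually need for the equivalence itself; the paper keeps it partly because it is reused later, e.g.\ in the proof of Corollary~\ref{cor1}.)
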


\begin{proof} By Remark \ref{motivation}, conditions (i) and (ii) are equivalent.

First, we show that there exists a codeword $\vec{c}\in C^\bot$ such that $\pi_i(\vec{c})\neq 0$. If not, that is, for any codeword $\vec{c}\in C^\bot$, it holds $\pi_i(\vec{c})= 0$.
Then the unit vector with the unique nonzero component $1$ on the $i$-th coordinate belongs to $C$, which contradicts to the assumption $d(C)\geqslant2$.

So there exists a codeword $\vec{c}\in C^\bot$ such that $\pi_i(\vec{c})=1$ by the linearity of $C$. The rest of the proof that conditions (ii) and (iii) are equivalent is clear.

(iii)$\Longrightarrow$(iv). For any codeword $\vec{y}\in C^\bot$ with
\[
   \pi_i(\vec{y})=1\qquad \mbox{and}\qquad\pi_{B^c}(\vec{y})= \vec{0}\ ,
\]
we have
\[
    \sum_{j\in B}\pi_j(\vec{y})\pi_j(\vec{c})+\pi_i(\vec{c})=0
\]
for any codeword $\vec{c}\in C$. So define $f_{B,i}: \pi_B(C)\rightarrow \f$ by setting
\[
    f_{B,i}(\pi_B(\vec{c}))=- \sum_{j\in B}\pi_j(\vec{y})\pi_j(\vec{c}),
\]
for all $\vec{c}\in C$. Then $f_{B,i}$ satisfies the condition.

(iv)$\Longrightarrow$(iii). From the proof of ``(iii)$\Longrightarrow$(iv)", we see that the required codeword in $C^\bot$ is actually the coefficients of
the map
\[
   \phi_{B,i}=\pi_{i}-f_{B,i}.
\]

(iv)$\Longrightarrow$(v). If the statement (v) does not hold, then there exists a codeword $\vec{c}\in C$ such that
\[
   \pi_i(\vec{c})=1\qquad \mbox{and}\qquad\pi_{B}(\vec{c})= \vec{0}\ ,
\]
which contradicts to $f_{B,i}(\pi_B(\vec{c}))=\pi_i(\vec{c})$.

(v)$\Longrightarrow$(iv). A map
    $$f_{B,i}: \pi_B(C)\longrightarrow \f$$
satisfying $f_{B,i}(\pi_B(\vec{c}))=\pi_i(\vec{c})$ for all $\vec{c}\in C$ is always linear over $\f$ by the linearity of $C$. So if the map
    $$f_{B,i}: \pi_B(C)\longrightarrow \f$$
satisfying $f_{B,i}(\pi_B(\vec{c}))=\pi_i(\vec{c})$ for all $\vec{c}\in C$ does not exist, then there exist two different codewords $\vec{c},\vec{c}'\in C$ such that
\[
   \pi_i(\vec{c})\neq\pi_i(\vec{c}') \qquad \mbox{and}\qquad\pi_{B}(\vec{c})= \pi_{B}(\vec{c}')\ .
\]
That is, the codeword $\vec{x}=\vec{c}-\vec{c}'\in C$ satisfies
\[
   \pi_i(\vec{x})=\pi_i(\vec{c}-\vec{c}')\neq 0 \qquad \mbox{and}\qquad\pi_{B}(\vec{x})= \pi_{B}(\vec{c}-\vec{c}')=\vec{0}\ ,
\]
which contradicts to (v).
\end{proof}

By Proposition \ref{reconstruction}, adversary groups to $R_{i}$ can be completely characterized by
\begin{prop}\label{adv}
For any receiver $R_{i}$, the following conditions are equivalent:
\begin{description}
\item[(i)] $A\subseteq\p\setminus\{R_i\}$ is an adversary group to $R_{i}$;

\item[(ii)] $\vec{g}_i$ is not contained in the subspace of $\f^k$ generated by $\{\vec{g}_j\,|\,j\in A\}$;

\item[(iii)] there is no codeword $\vec{c}\in C^\bot$ such that
\[
   \pi_i(\vec{c})=1\qquad \mbox{and}\qquad\pi_{A^c}(\vec{c})=\vec{0}\ ;
\]

\item[(iv)] there exists a codeword $\vec{c}\in C$ such that
\[
   \pi_i(\vec{c})=1\qquad \mbox{and}\qquad\pi_{A}(\vec{c})= \vec{0}\ .
\]
\end{description}
\end{prop}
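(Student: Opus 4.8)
The plan is to obtain Proposition~\ref{adv} as the formal contrapositive of Proposition~\ref{reconstruction}, since an adversary group to $R_{i}$ is by design nothing but a subset that fails to be a substitution group. First I would isolate the one conceptual point that makes this work: for a subset $A\subseteq\p\setminus\{R_i\}$, the statements ``$A$ cannot obtain any information of the label of $R_{i}$'' and ``$A$ can successfully make a substitution attack to $R_{i}$'' are mutually exclusive and jointly exhaustive. This dichotomy is exactly what the security analysis furnishes: by Remark~\ref{motivation} and the argument of Theorem~\ref{general}, the coalition indexed by $A$ either has $\vec{g}_i$ inside the column span $\langle\vec{g}_j\mid j\in A\rangle$, in which case it recovers the label of $R_{i}$ and forges with certainty, or it does not, in which case all $q$ candidate labels remain equally likely and it learns nothing beyond a uniform guess. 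There is no intermediate regime. Hence condition~(i) of Proposition~\ref{adv}, ``$A$ is an adversary group to $R_{i}$'', is simply the negation of condition~(i) of Proposition~\ref{reconstruction} applied to the subset $A$ in the role of $B$.

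With that in hand, I would apply Proposition~\ref{reconstruction} verbatim to $A$ and negate each of its equivalent conditions, using the elementary fact that equivalent statements have equivalent negations. Reading off the matches: condition~(ii) here, $\vec{g}_i\notin\langle\vec{g}_j\mid j\in A\rangle$, is the negation of condition~(ii) there; condition~(iii) here, the nonexistence of a dual codeword $\vec{c}\in C^\bot$ with $\pi_i(\vec{c})=1$ and $\pi_{A^c}(\vec{c})=\vec{0}$, is the negation of condition~(iii) there; and condition~(iv) here, the existence of a codeword $\vec{c}\in C$ with $\pi_i(\vec{c})=1$ and $\pi_A(\vec{c})=\vec{0}$, is the negation of condition~(v) there. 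I would note explicitly that we do not need to touch condition~(iv) of Proposition~\ref{reconstruction} (the existence of the $\f$-linear map $f_{B,i}$), since the present statement carries no counterpart to it; the four conditions listed suffice.

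I expect no genuine technical obstacle in this argument, because all of the linear-algebraic content --- the column-span criterion, its translation into the dual code, and the codeword-existence dichotomy between $C$ and $C^\bot$ --- was already proved in Proposition~\ref{reconstruction}. The single step deserving care is the first one: making the equivalence of ``obtains no information'' with ``is not a substitution group'' precise rather than merely intuitive, i.e.\ confirming that the success probability is genuinely the all-or-nothing $1$ versus $1/q$ established in the security analysis, with nothing in between. Once that dichotomy is nailed down, Proposition~\ref{adv} is a purely formal negation of the preceding proposition and requires no further computation.
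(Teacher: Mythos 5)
Your proposal is correct and is essentially the paper's own argument: the paper gives no separate proof of Proposition~\ref{adv}, introducing it only with the phrase ``By Proposition~\ref{reconstruction}, adversary groups to $R_i$ can be completely characterized by\dots'', which is precisely the negate-each-condition route you take (with adv~(ii), (iii), (iv) being the negations of reconstruction~(ii), (iii), (v) respectively, and reconstruction~(iv) having no counterpart). Your explicit justification of the all-or-nothing dichotomy---full label recovery when $\vec{g}_i$ lies in the coalition's column span versus a uniformly distributed label otherwise, via Remark~\ref{motivation} and the argument of Theorem~\ref{general}---is exactly the step the paper leaves implicit in identifying ``not a substitution group'' with ``adversary group''.
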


\begin{cor}\label{cor1}
\begin{description}
\item[(i)] For any $i=1,2,\cdots,V$, we have
\[
   d(C^\bot)-1\leqslant r_i(C)\leqslant V-d(C)+1\ ,
\]
and
\[
   \max\{r_i(C)\,|\,i=1,2,\cdots,V\}=V-d(C)+1, \quad\min\{r_i(C)\,|\,i=1,2,\cdots,V\}-1=d(C^\bot)-1\ .
\]

\item[(ii)] For any $i=1,2,\cdots,V$, we have
\[
    d(C^\bot)-2\leqslant t_i(C)\leqslant r_i(C)-1\ ,
\]
and
\[
   \min\{t_i(C)\,|\,i=1,2,\cdots,V\}=d(C^\bot)-2\ .
\]
\end{description}
\end{cor}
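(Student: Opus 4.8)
The plan is to funnel both quantities through two coordinate-wise weight parameters and then read off every assertion. For a fixed $i$ set $w_i=\min\{\mathrm{Wt}(\vec{c}):\vec{c}\in C,\ \pi_i(\vec{c})\neq 0\}$ and $d_i=\min\{\mathrm{Wt}(\vec{c}):\vec{c}\in C^\bot,\ \pi_i(\vec{c})\neq 0\}$. By Proposition~\ref{adv}, a set $A\subseteq\p\setminus\{R_i\}$ is an adversary group exactly when it lies in the zero set of some codeword of $C$ nonzero at $i$; such groups are downward closed, so the largest one has cardinality $V-w_i$. Dually, by Proposition~\ref{reconstruction} the substitution groups are the subsets of $\p\setminus\{R_i\}$ containing $\mathrm{supp}(\vec{c})\setminus\{i\}$ for some $\vec{c}\in C^\bot$ with $\pi_i(\vec{c})=1$; these are upward closed, so the smallest one has cardinality $d_i-1$. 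Since $r_i(C)$ is one more than the size of the largest adversary group and $t_i(C)$ is one less than the size of the smallest substitution group, I obtain
\[
   r_i(C)=V-w_i+1,\qquad t_i(C)=d_i-2 .
\]

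With these identities the inequalities become bookkeeping. Since $w_i\geqslant d(C)$ and $d_i\geqslant d(C^\bot)$, I immediately get $r_i(C)\leqslant V-d(C)+1$ and $t_i(C)\geqslant d(C^\bot)-2$; and $t_i(C)\leqslant r_i(C)-1$ holds for a soft reason, namely $t_i(C)$ is the size of some adversary group whereas $r_i(C)-1$ is the size of the largest one. For the two extremal values I would evaluate the weight parameters at a well-chosen coordinate: taking $i$ in the support of a minimum-weight codeword of $C$ forces $w_i=d(C)$, hence $r_i(C)=V-d(C)+1$ and $\max_i r_i(C)=V-d(C)+1$; symmetrically, taking $i$ in the support of a minimum-weight codeword of $C^\bot$ forces $d_i=d(C^\bot)$, hence $t_i(C)=d(C^\bot)-2$ and $\min_i t_i(C)=d(C^\bot)-2$.

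Next I would establish the universal lower bound $r_i(C)\geqslant d(C^\bot)-1$, i.e. $w_i\leqslant V-d(C^\bot)+2$, working with the columns $\vec{g}_1,\dots,\vec{g}_V$ of $G$. As $d(C^\bot)$ is the least number of linearly dependent columns of $G$, any $d(C^\bot)-1$ columns are independent. Fixing $i$ and any $Z\subseteq[V]\setminus\{i\}$ with $|Z|=d(C^\bot)-2$, independence of $\{\vec{g}_j:j\in Z\cup\{i\}\}$ gives $\vec{g}_i\notin\mathrm{span}\{\vec{g}_j:j\in Z\}$, so some $\vec{x}\in\f^k$ satisfies $\vec{x}\cdot\vec{g}_j=0$ for all $j\in Z$ while $\vec{x}\cdot\vec{g}_i\neq 0$. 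Then $\vec{x}G$ is nonzero at $i$ and vanishes on $Z$, giving $w_i\leqslant V-(d(C^\bot)-2)$; the set $Z$ exists because $d(C^\bot)\leqslant k+1$ by the Singleton bound for $C^\bot$.

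The step I expect to be the main obstacle is the exact value of $\min_i r_i(C)$, equivalently of $\max_i w_i$. The lower bound above already gives $\min_i r_i(C)\geqslant d(C^\bot)-1$; the matching upper bound would require a coordinate $i^\ast$ with $w_{i^\ast}\geqslant V-d(C^\bot)+2$, that is, a coordinate avoided by the supports of all codewords of $C$ of weight at most $V-d(C^\bot)+1$. For the Reed--Solomon (MDS) instantiation generalized by the scheme this is automatic, since no nonzero codeword is that light, and I would record that case first to confirm the formula. For a general linear code, however, the light codewords may already cover all of $[V]$, so such a coordinate need not exist; pinning down $\min_i r_i(C)$ therefore seems to demand either the MDS hypothesis of the motivating example or a finer analysis, via the matroid of the columns of $G$, of how the minimum-weight supports of $C$ are distributed. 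I would concentrate the effort there, and would treat the stated value of $\min_i r_i(C)$ as the claim most in need of verification.
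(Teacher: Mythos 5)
Your two-parameter framework is correct, and it is in fact cleaner and sharper than the paper's own treatment. The identities $r_i(C)=V-w_i+1$ and $t_i(C)=d_i-2$ follow from Propositions~\ref{reconstruction} and~\ref{adv} exactly as you argue (the two propositions make ``substitution group'' and ``adversary group'' complementary, and the two families are respectively upward and downward closed), and from them you correctly obtain the two-sided bound on $r_i(C)$, the bound $d(C^\bot)-2\leqslant t_i(C)\leqslant r_i(C)-1$, the equality $\max_i r_i(C)=V-d(C)+1$, and the equality $\min_i t_i(C)=d(C^\bot)-2$. Your direct proof of this last equality, via $d_i=d(C^\bot)$ at a coordinate in the support of a minimum-weight dual codeword, is an improvement over the paper, which instead deduces it from its claim about $\min_i r_i(C)$ --- precisely the claim you flagged as doubtful.

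That suspicion is justified, and you could have pushed it to an outright counterexample: the claimed equality is false in general, in both of its readings. Take $C=\{0000,1100,0011,1111\}\subseteq\mathbb{F}_2^4$, the direct sum of two length-$2$ repetition codes; it is self-dual, satisfies the scheme's standing hypotheses $d(C)=d(C^\bot)=2\geqslant 2$, and has $w_i=2$ for every $i$, hence $r_i(C)=V-w_i+1=3$ for every $i$ (concretely, $\{3,4\}$ is not a substitution group to $R_1$ because $\vec{g}_3=\vec{g}_4=(0,1)$ do not span $\vec{g}_1=(1,0)$). So $\min_i r_i(C)=3$, whereas the corollary asserts it equals $d(C^\bot)=2$ and the paper's proof aims instead at $d(C^\bot)-1=1$ (note the statement and its proof contradict each other). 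This is exactly the obstruction you described: the supports of the codewords of weight at most $V-d(C^\bot)+1$ cover all of $[V]$. In your framework the paper's error is transparent: for $i$ in the support $T$ of a minimum-weight dual codeword, exhibiting the single substitution group $T\setminus\{i\}$ of size $d(C^\bot)-1$ only bounds the \emph{smallest} substitution group size, i.e.\ it proves $t_i(C)\leqslant d(C^\bot)-2$; it does not show that \emph{every} subset of size $d(C^\bot)-1$ is a substitution group, which is what $r_i(C)=d(C^\bot)-1$ means by definition. Your MDS remark is also right: for an MDS code your span argument forces $w_i\leqslant V-d(C^\bot)+2=V-k+1=d(C)$, so $w_i=d(C)$ and $r_i(C)=k=d(C^\bot)-1$ for every $i$. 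The honest conclusion is that every assertion of Corollary~\ref{cor1} except the second equality of (i) is true and proved by your argument; that equality must either be weakened to the lower bound $\min_i r_i(C)\geqslant d(C^\bot)-1$ you established, or asserted only under an MDS hypothesis (e.g.\ the Reed--Solomon instantiation), and even then with the misplaced $-1$ on its left-hand side deleted.
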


\begin{proof} (i) Suppose $B\subseteq\p\setminus\{R_i\}$ is any substitution group to $R_i$. By Proposition \ref{reconstruction} (iii), there is a codeword $\vec{c}\in C^\bot$ such that
\[
   \pi_i(\vec{c})=1\qquad \mbox{and}\qquad\pi_{B^c}(\vec{c})=\vec{0}\ .
\]
Then we have
\[
   d(C^\bot)\leqslant\mathrm{wt}(\vec{c})\leqslant |B|+1\ .
\]
So
\[
 r_i(C)\geqslant |B|\geqslant d(C^\bot)-1\ .
\]

For any $B\subseteq\p\setminus\{R_i\}$ with cardinality $\geqslant V-d(C)+1$, it is obvious that any codeword $\vec{c}\in C$ with $\pi_{i}(\vec{c})=1$ (in the proof of Proposition \ref{reconstruction}, we have seen that such a codeword does exist.) has $\pi_{B}(\vec{c})\neq \vec{0}$. Otherwise, the minimum distance $d(C)\leqslant V-(V-d(C)+1)=d(C)-1$. So by Proposition \ref{reconstruction} (v), it follows
\[
    r_i(C)\leqslant V-d(C)+1\ .
\]

Let $\vec{c}$ be a codeword in $C$ with minimum Hamming weight. Denote by $S$ the support of $\vec{c}$. Let $B=[V]\setminus S$. Then by Proposition \ref{reconstruction} (v), $B$ is not a substitution group to $R_i$ for any $i\in S$. So
\[
   \max\{r_i(C)\,|\,i=1,2,\cdots,V\}\geqslant\max\{r_i(C)\,|\,i\in S\}\geqslant |B|+1= V-d(C)+1\ .
\]
And hence
\[
   \max\{r_i(C)\,|\,i=1,2,\cdots,V\}=V-d(C)+1\ .
\]
To prove $\min\{r_i(C)\,|\,i=1,2,\cdots,V\}-1=d(C^\bot)-1$, it suffices to show
\[
   r_i(C)=d(C^\bot)-1
\]
for some $i=1,2,\cdots,V$. Let $\vec{y}$ be a codeword in $C^\bot$ with minimum Hamming weight. Denote by $T$ the support of $\vec{y}$. For any $i\in T$, $T\setminus\{i\}$ is a substitution group to $R_i$ with cardinality $d(C^\bot)-1$. On the other hand, by Proposition \ref{reconstruction} (ii), any subset of $\p\setminus\{R_i\}$ with cardinality $\leqslant d(C^\bot)-2$ could not be a substitution group to $R_i$. So
\[
   r_i(C)=d(C^\bot)-1
\]
for any $i\in T$.

(ii) $t_i(C)\leqslant r_i(C)-1$ by the definition. For any $B\subseteq\p\setminus\{R_i\}$ with cardinality $\leqslant d(C^\bot)-2$, there is no codeword $\vec{c}\in C^\bot$ such that
\[
   \pi_i(\vec{c})=1\qquad \mbox{and}\qquad\pi_{B^c}(\vec{c})=\vec{0}\ .
\]
If not, then there is a codeword $\vec{c}\in C^\bot$ such that
\[
   \pi_i(\vec{c})=1\qquad \mbox{and}\qquad\pi_{B^c}(\vec{c})=\vec{0}\ .
\]
Then $C^\bot$ has a codeword $\vec{c}$ with Hamming weight $\leqslant |B|+1(\leqslant d(C^\bot)-1)$ which is impossible. So by Proposition \ref{adv}, $B$ is an adversary group to $R_i$. And hence
\[
    d(C^\bot)-2\leqslant t_i(C)\ .
\]
Since
\[
    d(C^\bot)-2\leqslant t_i(C)\leqslant r_i(C)-1\ ,
\]
we have
\[
    d(C^\bot)-2\leqslant \min\{t_i(C)\,|\,i=1,2,\cdots,V\}\leqslant \min\{r_i(C)\,|\,i=1,2,\cdots,V\}-1=d(C^\bot)-2\ .
\]
So
\[
   \min\{t_i(C)\,|\,i=1,2,\cdots,V\}=d(C^\bot)-2\ .
\]
\end{proof}

By Corollary \ref{cor1}, it is natural to get
\begin{cor}\label{cor2}For any receiver $R_{i}$, we have
\begin{description}
\item[(i)] Subsets of $\p\setminus\{R_i\}$ with cardinality $\geqslant (V-d(C)+1)$ are substitution groups to $R_{i}$.

\item[(ii)] Subsets of $\p\setminus\{R_i\}$ with cardinality $\leqslant (d(C^\bot)-2)$ are adversary groups to $R_{i}$.

\item[(iii)] For MDS codes $C$, subsets of $\p\setminus\{R_i\}$ with cardinality $\leqslant (d(C^\bot)-2)$ are all the adversary groups to $R_{i}$.
\end{description}
\end{cor}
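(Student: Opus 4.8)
The plan is to derive all three parts directly from Corollary~\ref{cor1} together with two monotonicity observations, and in the MDS case from the defining property that any $k$ columns of a generator matrix are linearly independent.

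First I would record the two monotonicity facts, both immediate from the column-span characterizations in Propositions~\ref{reconstruction}(ii) and~\ref{adv}(ii). If $B\subseteq B'\subseteq\p\setminus\{R_i\}$ and $B$ is a substitution group to $R_i$, then $\vec{g}_i$ lies in the span of $\{\vec{g}_j\,|\,j\in B\}$, hence a fortiori in the span of the larger set $\{\vec{g}_j\,|\,j\in B'\}$, so $B'$ is again a substitution group; thus substitution groups are closed under taking supersets. Dually, since $A$ is an adversary group iff $\vec{g}_i\notin\langle\vec{g}_j\,|\,j\in A\rangle$, adversary groups are closed under taking subsets.

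For (i), Corollary~\ref{cor1}(i) gives $r_i(C)\leqslant V-d(C)+1$. Any $B\subseteq\p\setminus\{R_i\}$ with $|B|\geqslant V-d(C)+1\geqslant r_i(C)$ contains a subset of cardinality $r_i(C)$, which is a substitution group by the definition of $r_i(C)$; by the superset-closure just noted, $B$ itself is a substitution group. For (ii), Corollary~\ref{cor1}(ii) gives $t_i(C)\geqslant d(C^\bot)-2$ (indeed the proof of that corollary shows directly, via Proposition~\ref{adv}, that no $B$ with $|B|\leqslant d(C^\bot)-2$ admits a codeword of $C^\bot$ witnessing a substitution group), so every such $B$ is an adversary group; alternatively one extends $B$ to a set of cardinality $t_i(C)$ and applies subset-closure.

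The only part requiring the MDS hypothesis is (iii), where I must show that the adversary groups are \emph{exactly} those of cardinality $\leqslant d(C^\bot)-2$. Here I would use that an MDS $[V,k,V-k+1]$ code has $d(C^\bot)=k+1$, so $d(C^\bot)-2=k-1$, and that any $k$ columns of $G$ form a basis of $\f^k$. Given $A\subseteq\p\setminus\{R_i\}$: if $|A|\leqslant k-1$ then $\{\vec{g}_j\,|\,j\in A\}\cup\{\vec{g}_i\}$ is a set of at most $k$ columns, hence linearly independent, so $\vec{g}_i$ is not in the span of $\{\vec{g}_j\,|\,j\in A\}$ and $A$ is an adversary group by Proposition~\ref{adv}(ii); if instead $|A|\geqslant k$, any $k$ of the columns indexed by $A$ already span $\f^k$, so $\vec{g}_i$ lies in the span and $A$ is a substitution group by Proposition~\ref{reconstruction}(ii), hence not an adversary group. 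The main (and only mild) obstacle is the bookkeeping of the MDS relation $d(C^\bot)=k+1$ and the ``any $k$ columns span $\f^k$'' property; everything else is a formal consequence of Corollary~\ref{cor1} and the monotonicity observations.
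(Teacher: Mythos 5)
Your proposal is correct and takes essentially the same route as the paper: the paper states this corollary without an explicit proof, remarking only that it is ``natural'' from Corollary~\ref{cor1}, and your argument---Corollary~\ref{cor1} combined with the superset/subset monotonicity that is immediate from the span characterizations in Propositions~\ref{reconstruction}(ii) and~\ref{adv}(ii), plus for part (iii) the standard MDS facts that $d(C^\bot)=k+1$ and that any $k$ columns of $G$ are linearly independent (a fact the paper itself records just before Theorem~\ref{general})---is exactly the intended filling-in of that remark. There are no gaps.
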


There is a gap in Corollary~\ref{cor2} in general we do not known whether a subset of size in the gap is a substitution group to $R_{i}$ or not for general code-based authentication scheme. Actually, it is {\bf NP}-hard to list all substitution groups to $R_{i}$ in general. Even for authentication scheme based on algebraic geometric codes from elliptic curves, it is already {\bf NP}-hard (under {\bf RP}-reduction) to list all substitution groups to $R_{i}$ \cite{CLX,chengqi}.

 By Proposition \ref{reconstruction}, we obtain the main result of this section, a generalization of Proposition~\ref{minimal}:
\begin{thm}\label{thm}For the authentication scheme we constructed, we have
\begin{description}
\item[(i)] The set of all minimal substitution groups to the receiver $R_{i}$ is determined completely by all the minimal codewords respect to $i$ in $C^\bot$.

\item[(ii)] All adversary groups to the receiver $R_{i}$ are one-to-one corresponding to subsets of $[V]\setminus\{i\}$ such that each of them together with $i$ does not contain any support of minimal codeword respect to $i$ in $C^\bot$.
\end{description}
\end{thm}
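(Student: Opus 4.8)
The plan is to derive both parts directly from the equivalence (i)$\Leftrightarrow$(iii) of Proposition~\ref{reconstruction}, which translates the combinatorial notion of a substitution group into a statement about supports of codewords in $C^\bot$. Concretely, a subset $B\subseteq\p\setminus\{R_i\}$ is a substitution group to $R_i$ if and only if there is a codeword $\vec{c}\in C^\bot$ with $\pi_i(\vec{c})=1$ and support contained in $B\cup\{i\}$, since the condition $\pi_{B^c}(\vec{c})=\vec{0}$ says exactly that $\vec{c}$ vanishes on $[V]\setminus(B\cup\{i\})$. The map I would use throughout is
\[
   \vec{c}\longmapsto \mathrm{supp}(\vec{c})\setminus\{i\},
\]
which sends a codeword of $C^\bot$ with $i$-th component $1$ to a substitution group to $R_i$ (note $i\in\mathrm{supp}(\vec{c})$ since $\pi_i(\vec{c})=1$).

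For part (i), I would show that this map restricts to a surjection from the minimal codewords respect to $i$ onto the minimal substitution groups. If $\vec{c}$ is minimal respect to $i$ and $B=\mathrm{supp}(\vec{c})\setminus\{i\}$, then $B$ is a substitution group by Proposition~\ref{reconstruction}(iii); moreover, deleting any $j\in B$ would make $B\setminus\{j\}$ a substitution group, hence (again by (iii)) produce a codeword with $i$-th component $1$ whose support is strictly inside $\mathrm{supp}(\vec{c})$, contradicting minimality of $\vec{c}$. Thus $B$ is a minimal substitution group. Conversely, given a minimal substitution group $B$, Proposition~\ref{reconstruction}(iii) supplies a codeword $\vec{c}\in C^\bot$ with $\pi_i(\vec{c})=1$ and $\mathrm{supp}(\vec{c})\subseteq B\cup\{i\}$; then $\mathrm{supp}(\vec{c})\setminus\{i\}$ is itself a substitution group contained in $B$, so minimality of $B$ forces $\mathrm{supp}(\vec{c})\setminus\{i\}=B$, and the same strict-containment argument shows $\vec{c}$ is minimal respect to $i$.

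For part (ii), I would use that an adversary group is precisely a subset of $\p\setminus\{R_i\}$ that fails to be a substitution group, so by Proposition~\ref{reconstruction}(iii) a set $A\subseteq[V]\setminus\{i\}$ is an adversary group if and only if \emph{no} codeword $\vec{c}\in C^\bot$ with $\pi_i(\vec{c})=1$ has support contained in $A\cup\{i\}$. The core step is the equivalence between ``no codeword with $i$-th component $1$ is supported in $A\cup\{i\}$'' and ``$A\cup\{i\}$ contains no support of a minimal codeword respect to $i$''. One direction is immediate. For the other, assuming some codeword with $i$-th component $1$ is supported in $A\cup\{i\}$, I would pick one whose support is minimal among all such codewords; since any strictly smaller support still lies in $A\cup\{i\}$, this codeword is automatically minimal respect to $i$ in the global sense, and its support lies in $A\cup\{i\}$. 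This gives exactly the claimed identity correspondence between adversary groups and the subsets $A$ whose union with $\{i\}$ omits every minimal support.

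The only delicate point is that last passage, from ``minimal support among codewords supported in $A\cup\{i\}$'' to ``globally minimal respect to $i$''. It works because global minimality only concerns codewords with support strictly inside $\mathrm{supp}(\vec{c})$, and every such support is already contained in $A\cup\{i\}$, so local and global minimality coincide here. Beyond this observation, the argument is bookkeeping with support inclusions, and I would take care throughout that every codeword produced is normalized to have $i$-th component $1$, which is always possible by the linearity of $C^\bot$ as already used in the proof of Proposition~\ref{reconstruction}.
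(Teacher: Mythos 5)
Your proposal is correct and takes the same route the paper intends: the paper gives no written proof of Theorem~\ref{thm}, asserting only that it follows ``by Proposition~\ref{reconstruction},'' and your support-minimality bookkeeping (substitution groups $\leftrightarrow$ supports of codewords in $C^\bot$ normalized to have $i$-th component $1$, plus the local-versus-global minimality observation) is precisely the omitted derivation. The one step you leave tacit --- that the substitution-group property is monotone under supersets, which you need to pass from the ``remove one receiver'' definition of minimality to ``no proper subset is a substitution group'' --- is immediate from the span condition in Proposition~\ref{reconstruction}(ii), so there is no gap.
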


\begin{exm}
Continue with Example~\ref{exm1}, minimum codewords respect to $5$ in $C^\bot$ are list in Table~1.
\begin{table}
\begin{equation*}
    \begin{tabular}{|c|c|c|c|c|}
      \hline
    (2 2 1 0 1 0 4 0 0)&
    (0 1 0 3 1 2 0 2 2)&
    (0 3 0 0 1 3 3 4 4)&
    (3 0 0 4 1 1 3 0 1)&
    (2 0 0 0 1 2 4 2 1)\\
    (3 0 2 0 1 0 2 2 0)&
    (0 0 3 1 1 0 2 4 2)&
    (0 4 0 1 1 1 2 0 0)&
    (0 0 4 4 1 2 4 0 4)&
    (4 0 2 4 1 4 1 0 0)\\
    (1 0 1 4 1 0 2 0 3)&
    (0 0 1 0 1 1 3 2 3)&
    (0 2 2 0 1 4 3 0 2)&
    (1 0 0 1 1 3 0 4 1)&
    (2 0 3 4 1 3 0 0 2)\\
    (0 2 0 4 1 0 4 3 3)&
    (4 0 0 3 1 0 2 3 1)&
    (0 0 2 3 1 3 0 3 0)&
    (1 4 0 0 1 0 1 3 0)&
    (3 3 0 2 1 0 0 3 4)\\
    (4 3 3 0 1 0 0 4 0)&
    (3 1 4 2 1 2 0 0 0)&
    (0 3 1 3 1 0 0 0 1)&
    (0 0 0 2 1 4 1 1 1)&
    (2 4 0 4 1 4 0 1 0)\\
    (4 2 0 0 1 1 0 0 3)&
    (4 0 4 0 1 3 0 2 4)&
    (0 1 4 0 1 0 3 1 0)&&\\
      \hline
    \end{tabular}
\end{equation*}
\caption{}
\end{table}

Since coordinate $5$ is in the support of any minimal codeword respect to $5$, we exclude $5$ from all supports of these codewords. Then we get supports of minimal codewords excluding $5$, see Table~2.
\begin{table}
\begin{equation*}
    \begin{tabular}{|c|c|c|c|c|c|c|}
      \hline
    \{ 1, 2, 3, 7 \}&
    \{ 2, 4, 6, 8, 9 \}&
    \{ 2, 6, 7, 8, 9 \}&
    \{ 1, 4, 6, 7, 9 \}&
    \{ 1, 6, 7, 8, 9 \}\\
    \{ 1, 3, 7, 8 \}&
    \{ 3, 4, 7, 8, 9 \}&
    \{ 2, 4, 6, 7 \}&
    \{ 3, 4, 6, 7, 9 \}&
    \{ 1, 3, 4, 6, 7 \}\\
    \{ 1, 3, 4, 7, 9 \}&
    \{ 3, 6, 7, 8, 9 \}&
    \{ 2, 3, 6, 7, 9 \}&
    \{ 1, 4, 6, 8, 9 \}&
    \{ 1, 3, 4, 6, 9 \}\\
    \{ 2, 4, 7, 8, 9 \}&
    \{ 1, 4, 7, 8, 9 \}&
    \{ 3, 4, 6, 8 \}&
    \{ 1, 2, 7, 8 \}&
    \{ 1, 2, 4, 8, 9 \}\\
    \{ 1, 2, 3, 8 \}&
    \{ 1, 2, 3, 4, 6 \}&
    \{ 2, 3, 4, 9 \}&
    \{ 4, 6, 7, 8, 9 \}&
    \{ 1, 2, 4, 6, 8 \}\\
    \{ 1, 2, 6, 9 \}&
    \{ 1, 3, 6, 8, 9 \}&
    \{ 2, 3, 7, 8 \}&&\\
      \hline
    \end{tabular}
\end{equation*}
\caption{}
\end{table}

So any substitution group to the receiver $R_{5}$ must contain at least one set in Table~2.
And subset of $\p\setminus\{R_5\}$ that does not contain any one set in Table~1 can not make a substitution attack to the receiver $R_{5}$ successfully in a probability better than $1/5$ ($1/5$ is the success probability of randomly choosing a label from $\fl$ for a fake message) using their knowledge of the key matrix $A$. From Table~2, notice that most subsets of $\p\setminus\{R_5\}$ with cardinality $4$ can not generate a fake message that can accepted by $R_{5}$ successfully in a probability better than $1/5$, even subsets of $\p\setminus\{R_5\}$ with cardinality $5$ can not, such as shown in Table~3. While in the scheme of Safavi-Naini and Wang with a $[9,5]$ Reed-Solomon code (the field must have cardinality $\geqslant 9$), any subset of $\p\setminus\{R_5\}$ with cardinality $5$ can successfully recover the private key of $R_5$ and hence they can easily make a substitution attack to $R_5$.

\begin{table}
\begin{equation*}
    \begin{tabular}{|c|c|c|c|c|c|}
      \hline
    \{ 1, 2, 3, 8, 9 \}&
    \{ 2, 3, 4, 6, 7 \}&
    \{ 3, 4, 6, 8, 9 \}&
    \{ 1, 3, 4, 6, 7 \}&
    \{ 1, 2, 3, 4, 7 \}&
    \{ 3, 4, 7, 8, 9 \}\\
    \{ 2, 3, 6, 8, 9 \}&
    \{ 1, 3, 6, 8, 9 \}&
    \{ 2, 4, 6, 7, 9 \}&
    \{ 1, 4, 6, 7, 8 \}&
    \{ 2, 4, 6, 7, 8 \}&
    \{ 1, 4, 6, 7, 9 \}\\
    \{ 1, 2, 4, 7, 8 \}&
    \{ 1, 2, 4, 7, 9 \}&
    \{ 1, 2, 6, 7, 8 \}&
    \{ 1, 2, 6, 7, 9 \}&
    \{ 1, 2, 3, 4, 9 \}&
    \{ 1, 3, 6, 7, 9 \}\\
    \{ 2, 3, 6, 7, 8 \}&
    \{ 1, 3, 6, 7, 8 \}&
    \{ 2, 3, 6, 7, 9 \}&
    \{ 1, 2, 4, 6, 8 \}&
    \{ 1, 2, 4, 6, 7 \}&
    \{ 1, 2, 3, 4, 8 \}\\
    \{ 2, 3, 4, 6, 9 \}&
    \{ 4, 6, 7, 8, 9 \}&
    \{ 1, 3, 4, 6, 8 \}&
    \{ 1, 2, 4, 6, 9 \}&
    \{ 2, 4, 7, 8, 9 \}&
    \{ 1, 2, 3, 7, 8 \}\\
    \{ 1, 2, 3, 7, 9 \}&
    \{ 2, 6, 7, 8, 9 \}&
    \{ 1, 2, 6, 8, 9 \}&
    \{ 1, 4, 7, 8, 9 \}&
    \{ 2, 3, 4, 6, 8 \}&
    \{ 3, 6, 7, 8, 9 \}\\
    \{ 3, 4, 6, 7, 9 \}&
    \{ 1, 2, 3, 6, 7 \}&
    \{ 1, 3, 4, 6, 9 \}&
    \{ 3, 4, 6, 7, 8 \}&
    \{ 1, 2, 4, 8, 9 \}&
    \{ 1, 3, 4, 8, 9 \}\\
    \{ 1, 6, 7, 8, 9 \}&
    \{ 1, 2, 3, 6, 9 \}&
    \{ 1, 2, 7, 8, 9 \}&
    \{ 2, 3, 4, 8, 9 \}&
    \{ 2, 3, 4, 7, 8 \}&
    \{ 1, 3, 4, 7, 9 \}\\
    \{ 1, 4, 6, 8, 9 \}&
    \{ 2, 4, 6, 8, 9 \}&
    \{ 2, 3, 4, 7, 9 \}&
    \{ 1, 3, 4, 7, 8 \}&
    \{ 1, 2, 3, 6, 8 \}&
    \{ 1, 3, 7, 8, 9 \}\\
    \{ 2, 3, 7, 8, 9 \}&
    \{ 1, 2, 3, 4, 6 \}&&&&\\
      \hline
    \end{tabular}
\end{equation*}
\caption{}
\end{table}

\end{exm}
\section{Conclusion}
In this paper, we construct an authentication scheme for multi-receivers and multiple messages based on linear code $C\,[V,k,d]$. There are many advantages. Compared with schemes based on MACs or digital signatures which depend on computational security. Our scheme is an unconditional secure authentication scheme, which can offer robustness against a coalition of up to ($d(C^\bot)-2$) malicious receivers. Similarly as the generalization of Shamir's secret sharing scheme to linear secret sharing sceme based on linear codes, compared with the scheme of Safavi-Naini and Wang~\cite{SW} which has a constraint on the number of verifying receivers that can not be larger than the size of the finite field. Our scheme allows arbitrary receivers for a fixed message base field. And, for some important receiver, coalitions of $k$ or more malicious receivers can not yet make a substitution attack on the receiver more efficiently than randomly guessing a label from the finite field for a fake message. While the authentication scheme of Safavi-Naini and Wang is a $(V,k)$ threshold authentication scheme, any $k$ of the $V$ receivers can easily produce a fake message that can be accepted by the receiver.

\bibliographystyle{plain}
\bibliography{authentication}

\end{document}